
\typeout{IJCAI--22 Instructions for Authors}


\documentclass{article}
\pdfpagewidth=8.5in
\pdfpageheight=11in
\usepackage{ijcai22}

\usepackage{times}
\usepackage{soul}
\usepackage{url}
\usepackage[hidelinks]{hyperref}
\usepackage[utf8]{inputenc}
\usepackage[small]{caption}
\usepackage{graphicx}
\usepackage{amsmath}
\usepackage{amsthm}
\usepackage{booktabs}
\urlstyle{same}



\usepackage{textcomp}
\usepackage{xcolor}
\usepackage{url}
\usepackage[T1]{fontenc}
\usepackage[utf8]{inputenc}
\usepackage{blindtext}
\usepackage{pdfpages}

\usepackage{here}
\usepackage{xcolor}
\usepackage{fancyhdr}
\usepackage[title,titletoc]{appendix}

\usepackage{bm,enumerate,amsmath,amssymb,amsthm}
\newtheoremstyle{assumption}
  {\topsep}
  {\topsep}
  {}
  {1em}
  {\itshape}
  {}
  {.5em}
  {\thmname{#1}\thmnumber{ #2}\thmnote{ (#3)}}
\theoremstyle{assumption}

\newtheorem{lemma}{Lemma}

\newtheorem{theorem}{Theorem}
\newtheorem{example}{Example}

\usepackage{algorithm}
\usepackage[noend]{algpseudocode}



\newcommand{\mc}{\mathcal}

\newcommand{\argmax}{\mathop{\rm argmax~}\limits}

\newcommand{\prev}{\gamma}

\newcommand{\tp}[2]{(#1, #2)}
\newcommand{\ik}{{\tp{i}{k}}}

\newcommand{\newcolor}{black}

\newcommand{\modified}[1]{\textcolor{black}{#1}}
\newcommand{\ijcai}[1]{\textcolor{black}{#1}}

\def\arxiv{1}
\newcommand{\cameracolor}{black}

\if\arxiv0
\newcommand{\camera}[1]{\textcolor{\cameracolor}{#1}}
\else
\newcommand{\camera}[1]{\textcolor{\cameracolor}{#1}}
\fi
\newcommand{\urlblue}[1]{\textcolor{blue}{#1}}




\pdfinfo{
/TemplateVersion (IJCAI.2022.0)
}

\title{A Polynomial-time Decentralised Algorithm for Coordinated Management of Multiple Intersections}

\author{
    Author Name
    \affiliations
    Affiliation
    \emails
    pcchair@ijcai-22.org
}

\def\anonym{0}
\if\anonym1
\author {
    Paper ID \# 5633
}
\else
\author{
Tatsuya Iwase$^1$
\and
Sebastian Stein$^2$
\and
Enrico H. Gerding$^2$
\and
Archie Chapman$^3$
\affiliations
$^1$Toyota Motor Europe NV/SA, Zaventem, Belgium (on leave from Toyota CRDL Inc., Japan)\\
$^2$University of Southampton, Southampton, United Kingdom\\
$^3$The University of Queensland, Brisbane, Australia\\
\emails
tiwase@mosk.tytlabs.co.jp,
\{ss2, eg\}@ecs.soton.ac.uk,
archie.chapman@uq.edu.au
}
\fi
\if0
\author{Samwise Gamgee}
\affiliation{
  \institution{Long Journey University}
  \city{Eriador}
  \country{New Zealand}}
\email{samwise.gamgee@lju.nz}

\author{Merry Brandybuck}
\affiliation{
  \institution{The Ring Company Inc.}
  \city{Lindon}
  \state{The Great Land}
  \country{New Zealand}}
\email{merry@ring.com}

\author{Frodo Baggins}
\affiliation{
  \institution{Two Towers University}
  \city{Rohan}
  \country{New Zealand}}
\email{frodo.baggins@2tu.nz}
\fi

\begin{document}

\maketitle

\if\arxiv1
\thispagestyle{fancy}
\pagestyle{fancy}
\fancyhf{}
\renewcommand{\headrulewidth}{0pt}
\lfoot[Copyright International
Joint Conferences on Artificial Intelligence (\href{https://ijcai.org}{IJCAI}), 2022. All rights
reserved.]{Copyright International
Joint Conferences on Artificial Intelligence (\href{https://ijcai.org}{\urlblue{IJCAI}}), 2022. All rights
reserved.}

\fi

\color{\newcolor}
\begin{abstract}
Autonomous intersection management has the potential to reduce road traffic congestion and energy consumption. 
To realize this potential, efficient algorithms are needed. 
However, most existing studies locally optimize one intersection at a time, and this can cause negative externalities on the traffic network as a whole.
Here, we focus on coordinating multiple intersections,
and formulate the problem as a distributed constraint optimisation problem (DCOP). We consider three utility design approaches that trade off efficiency and fairness.
Our polynomial-time algorithm for coordinating multiple intersections reduces the traffic delay by about 41\% compared to independent single intersection management approaches.
\if\arxiv0
\renewcommand{\thefootnote}{}
\footnote[0]{\camera{An extended version including technical appendices is available at \href{https://arxiv.org/abs/2105.11292}{https://arxiv.org/abs/2105.11292}.}}
\renewcommand{\thefootnote}{\arabic{footnote}}
\fi
\end{abstract}

\section{Introduction}
\label{sec:intro}

Autonomous intersection management (AIM) is the intelligent coordination of autonomous vehicles at intersections  \cite{dresner2008multiagent}. Here, an intersection manager agent allocates slots in advance to vehicles to minimize the total travel time. AIM has the potential to reduce traffic congestion, the number of accidents, fuel consumption as well as ${\rm CO}_2$ emissions \cite{namazi2019intelligent} compared to using traditional traffic signals.  However, many studies of AIM focus on the management of a single intersection and do not cover the coordination of traffic flows across the entire road network  \cite{namazi2019intelligent}. Indeed, the local optimization of one intersection can cause negative externalities on the surrounding intersections. For example, a vehicle passing an intersection can conflict with a platoon of vehicles at a downstream intersection and cause a cascade of braking. 
Hence, the coordinated management of multiple intersections (CMMI) is required to solve this issue. 

%
However, a key challenge in optimizing traffic over multiple intersections is its inherent computational complexity. \ijcai{As we will discuss later, CMMI contains an NP-hard packing problem.} Due to this, centralized approaches such as a mixed integer linear program (MILP) \cite{bredstrom2008combined}, microscopic models \cite{li2020autonomous}, queue models \cite{wu2012cooperative} and agent-based simulations \cite{jin2012multi,wang2020cooperative} solve the problem for each intersection independently or sacrifice scalability.

\ijcai{Decentralized solutions are generally more scalable, robust without having a single point of failure, and preferable from a privacy perspective, as they do not require sharing  the route information of all cars. They are also suitable for the dynamic nature of CMMI, where vehicles newly appear or change their routes, because agents only have to check the local changes, while a centralized controller has to monitor the entire system. However, a naive application of Distributed Constraint Optimization (DCOP) \cite{fioretto2018distributed}
will again result in controlling each individual intersection
independently \cite{vu2020decentralised},
leading to the same issue of ignoring externality mentioned above. Also, DCOP algorithms such as Max-sum \cite{farinelli2008decentralised} or MGM-k \cite{pearce2005local} are not scalable for multiple intersections due to their computational and communication complexity \cite{chapman2011benchmarking}. DCOP games \cite{chapman2011benchmarking} are a scalable and suitable approach for large problems. However, their application to CMMI is not trivial because the feasibility of the solution is not guaranteed due to the presence of hard constraints that are typically solved in mathematical programming problems using Lagrange multipliers.}

\ijcai{To address the issues above, we formulate a new CMMI game, which is an extension of DCOP games.}
We cast CMMI as a multiagent system (MAS) where agents need to be coordinated without global knowledge of the entire network or excessive communication between agents. Since the performance of a MAS depends on the computational ability of each agent, we consider 
three specific formulations: (1) intersection agents, where each agent controls the allocation of vehicles at a particular intersection; (2) vehicle agents, where each agent controls the allocation of slots for a particular vehicle; (3) atomic agents, where each agent corresponds to a vehicle at a particular intersection. We then compare this to an optimal solution computed using an MILP approach.


In short, we provide the first theoretical analysis of CMMI from the perspective of MAS, making the following novel theoretical and empirical contributions: (1) we prove that computing the optimal solution of CMMI is NP-hard, which shows the essential difficulty of coordination \ijcai{in intersection management}; 
(2) we propose the \ijcai{CMMI game} and a polynomial-time algorithm to find a feasible Nash equilibrium instead of the optimal solution (given some weak assumptions); and (3) we also propose a heuristic that coordinates multiple intersections. The empirical result shows that our algorithm typically achieves a social cost that is only 2.6\% greater than the optimal solution, while being 41\% smaller than the traditional AIM in case of car agents.
\color{black}

The remainder of the paper is structured as follows: Section \ref{sec:model} formalises the problem of CMMI. Section \ref{sec:theory} presents an algorithm that solves CMMI, along with a theoretical analysis of its efficiency. Section \ref{sec:exp} evaluates our algorithm with a real-world data, and Section \ref{sec:conclusion} concludes the paper.

\section{Model}
\label{sec:model}

In this section, we formulate our model of the \ijcai{coordinated management of multiple intersections (CMMI) as a mixed integer program (MIP). Then we transform CMMI into a distributed formulation based on a DCOP game \cite{chapman2011benchmarking}. The notation is summarised in Appendix~\ref{sec:appnote}, and all proofs of theorems are provided in Appendix \ref{sec:appproof}.} 

\subsection{Coordinated Management of Multiple Intersections}
\label{sec:cmmi}

We assume finite time steps $\mc{T}=\{0,\ldots,T\}$, intersections $\mc{I}=\{1,\ldots,I\}$ and cars $\mc{K}=\{1,\ldots,K\}$. The road network is represented by a directed graph $\mc{G} = \langle \mc{I},\mc{E} \rangle$, where edges $\mc{E}\subseteq \mc{I}^2$ denote road segments. The length
of an edge $e \in \mc{E}$ is given by $L_e\in\mathbb{N}_+$. We assume constant and common car speed, each car has a fixed route, and $L_e$ denotes the time to traverse $e$. Also, the edges are not FIFO (First In First Out), but cars can change their order while traversing the edges, if this is beneficial.\footnote[1]{This could be implemented by overtaking or filtering into appropriate lanes. If this is not possible, it is easy to enforce FIFO by introducing additional DCOP constraints.} $\mc{E}_i^{out}=\{(i,j)\in \mc{E}\}$ and $\mc{E}_i^{in}=\{(j,i)\in \mc{E}\}$ are the set of all outgoing and incoming edges of intersection $i$, respectively. Then, the set of edges connected to $i$ is $\mc{E}_i=\mc{E}_i^{out}\cup\mc{E}_i^{in}$.

The goal of CMMI is to minimize total travel time, or equivalently, total waiting time. It involves computing an allocation, i.e. a time slot for each car to pass each intersection on its route. Let $\mc{A}_\ik\subseteq \mc{T}\cup \{\emptyset\}$ denote a set of possible allocations for car $k$ at intersection $i$, where $\emptyset$ means the car is not yet allocated any time slot. 
Since we focus on the externality between intersections, we do not model the detail of the behavior inside the intersections. Then, allocation $a_\ik \in \mc{A}_\ik$ means that car $k$ leaves its incoming edge to intersection $i$ and immediately enters the outgoing edge at $a_\ik$. Each car $k\in\mc{K}$ has a fixed route $r_k=(i^k_1,\ldots,i^k_{\Omega})$ and a departure time $T_k^o$, and we assume that they are common knowledge. Since there is no allocation at the destination, we denote the intersections that car $k$ needs to be allocated as $r_k^{-}=r_k\setminus\{i_{\Omega}^k\}$. \ijcai{Then, the set of variables to be allocated is denoted by $\mc{IK}=\{\ik|k \in \mc{K}, i \in r_k^-\}$ and the solution space is denoted by $\mc{A}=\prod_{\ik \in \mc{IK}}\mc{A}_\ik$.} 
We also denote the previous intersection of $j=i^k_l$ by $\prev(j,k)=i^k_{l-1}$, where $l>1$.
The waiting time of $k$ at $i$ is given by $w_\ik=a_\ik-(a_{\tp{j}{k}}+L_{(j,i)})$, where $a_\ik \in \mc{A}_\ik$, $a_{\tp{j}{k}} \in \mc{A}_{\tp{j}{k}}$ and $j=\prev(i,k)$. If either $a_\ik$ or $a_{\tp{j}{k}}$ is $\emptyset$, $w_\ik=0$. Then, the objective of CMMI is to minimize the total waiting time or total delay $D(a)=\sum_{k\in\mc{K}}D_k(a)$, where $D_k(a)=\sum_{i\in r_k} w_\ik$.

A car 
can collide with other cars when their paths are crossing in an intersection. A path in an intersection $i$ is a pair of an incoming edge $(j,i)$ and an outgoing edge $(i,l)$. We denote the set of all paths passing the intersection by $\mc{P}_i=\{((j,i),(i,l))|(j,i),(i,l) \in \mc{E}_i\}$. A set of paths crossing each other can be denoted by a subset $x_i\subseteq \mc{P}_i$ (Figure \ref{fig:cross}). Then, we denote the set of all crossing path sets in $i$ by $\mc{X}_i\subseteq 2^{\mc{P}_i}$. Let $p_\ik\in\mc{P}_i$ denote the path of $k$ in $i$. Then, cars $k$ and $k'$ collide at intersection $i$ when their paths are crossing ($p_\ik, p_{\tp{i}{k'}}\in x_i$ and $x_i\in\mc{X}_i$) and they are allocated to the same time slot ($a_\ik=a_{\tp{i}{k'}}$). \modified{Also, we denote by $(\perp,(i_1^k,i_2^k))$ a special path that indicates car $k$ departs from $i_1^k$ to $i_2^k$, and this path crosses any path that also moves to $(i_1^k,i_2^k)$ in the end.}

\begin{figure}[t]
\centering
\includegraphics[width=0.75\columnwidth]{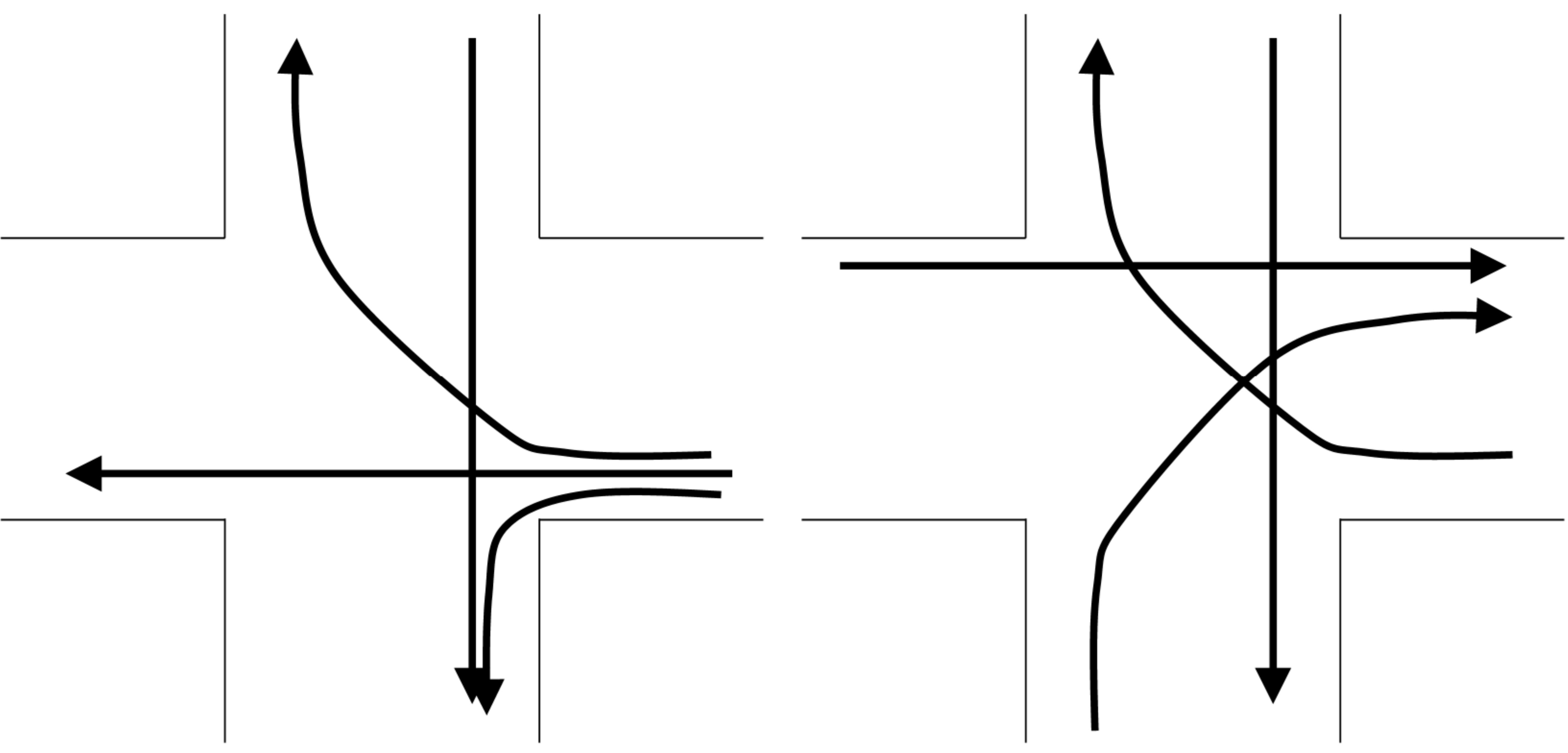}
\caption{Examples of crossing path set $x_i\in\mc{X}_i$ in a typical 4-way intersection. In either case, every pair of the paths in the set is \camera{conflicting} each other.}
\label{fig:cross}
\end{figure}

\color{\newcolor}
Given these definitions, the CMMI problem is formulated as follows:
\begin{equation}
\left.
\begin{array}{l}
\underset{\substack{a \in\mc{A}_{\text{feas}}}}{\min} D(a)
\end{array}
\right.
\label{eq:cmmi}
\end{equation}
where $\mc{A}_{\text{feas}}\subseteq \mc{A}$ is the set of feasible solutions defined by the following constraints:
\begin{align}
\mc{A}_{\text{feas}}=\{a\in\mc{A} &| \nonumber \\
h_{\text{wb}}^{\ik} &: w_\ik \in [0,T_{UB}], ~\forall \ik \in \mc{IK}, \label{eq:hwb} \\
h_{\text{lcap}}^{\tp{t}{e}} &: K_{e}^t\leq K_{UB}, ~\forall t\in\mc{T},\forall e\in\mc{E}, \label{eq:hlcap} \\
h_{\text{dep}}^k &: a_{\tp{i_1^k}{k}}\geq T^o_k,~\forall k\in\mc{K}, \label{eq:hdep} \\
h_{\text{arr}}^k &: a_{\tp{l}{k}}+L_{(l,i^k_{\Omega})}\leq T, l=\prev(i^k_{\Omega},k),\nonumber \\
&\qquad\forall k\in\mc{K}, \label{eq:harr} \\
h_{\text{col}}^i &: X_i=0,~\forall i\in\mc{I}\}, \label{eq:hcol}
\end{align}
which consists of several hard constraints $h$. 
Specifically, constraint (\ref{eq:hwb}) bounds the waiting time where $T_{UB}$ is an upper bound. (\ref{eq:hlcap}) is the constraint of the edge capacity, where $K_{(i,j)}^t=|\{\tp{i}{k} \in \mc{IK}|a_{\tp{i}{k}}\leq t\}|-|\{\tp{j}{k} \in \mc{IK}|a_{\tp{j}{k}}\leq t\}|$ denotes the number of cars on edge $(i,j)$ at time $t$, and $K_{UB}$ is a capacity.
Constraints (\ref{eq:hdep}) and (\ref{eq:harr}) are the departure and arrival time, respectively. Constraint (\ref{eq:hcol}) guarantees no collisions, where $X_i=|\{(\ik,\tp{i}{k'}\in\mc{IK})|k\not=k',\exists x_i\in\mc{X}_i,p_\ik, p_{\tp{i}{k'}}\in x_i, a_{\tp{i}{k}}=a_{\tp{i}{k'}}\}|$ is the number of collisions at $i$. We also provide a MILP formulation of CMMI for linear solvers in Appendix~\ref{sec:appmilp}. 
\color{black}
 Figure~\ref{fig:counterexp} shows an example of CMMI, with $I=7$ and $K=4$. For example, the route of car 2 is $r_2=(7,5,2,3)$ and its path at intersection 5 is $p_{5,2}=((7,5),(5,2))$. 

\color{\newcolor}
CMMI formulates the essential difficulty of the application of intersection management, as we show in the following.
\begin{theorem}
Finding an optimal feasible solution of CMMI that minimizes $D(a)$ is NP-hard, even when all the solutions in $\mc{A}$ satisfy constraints that bound the waiting time, $h_{\text{wb}}, h_{\text{lcap}}$ and $h_{\text{arr}}$.
\label{thm:NPh}
\end{theorem}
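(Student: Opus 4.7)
The plan is to establish NP-hardness by a polynomial-time reduction from GRAPH $k$-COLORABILITY, which is NP-complete for every fixed $k \geq 3$. Given an instance $(G, k)$ with $G = (V, E)$, I build a CMMI instance whose feasibility mirrors the $k$-colorability of $G$; since $D(a)$ is only defined over $\mc{A}_{\text{feas}}$, any algorithm that computes the optimum (or reports infeasibility) also decides $k$-colorability.

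The construction uses a single central intersection $i^*$, together with, for each $v \in V$, a private source $s_v$ and a private destination $d_v$. For each $v$, I introduce a car $k_v$ with route $r_{k_v} = (s_v, i^*, d_v)$, release time $T^o_{k_v} = 0$, and unit-length edges $(s_v, i^*)$ and $(i^*, d_v)$. I then restrict the per-variable allocation domains to $\mc{A}_{\tp{s_v}{k_v}} = \{0\}$ and $\mc{A}_{\tp{i^*}{k_v}} = \{1, \dots, k\}$. Because each car has a unique path through $i^*$ (its source and destination are private), I can freely choose the crossing family $\mc{X}_{i^*}$ so that $\{p_{k_u}, p_{k_v}\}$ lies in a common element of $\mc{X}_{i^*}$ if and only if $\{u, v\} \in E$, thereby encoding $G$ exactly. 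Finally, I take $T_{UB} \geq k-1$, $K_{UB} \geq 1$, and $T \geq k+1$, so that $h_{\text{wb}}$, $h_{\text{lcap}}$ and $h_{\text{arr}}$ are automatically satisfied by every $a \in \mc{A}$, discharging the hypothesis of the theorem.

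The equivalence is then immediate. By $h_{\text{col}}$, any feasible $a$ must satisfy $a_{\tp{i^*}{k_u}} \neq a_{\tp{i^*}{k_v}}$ whenever $\{u, v\} \in E$, so the map $c(v) := a_{\tp{i^*}{k_v}} \in \{1, \dots, k\}$ is a proper $k$-coloring of $G$; conversely, any proper $k$-coloring of $G$ yields a feasible CMMI allocation by setting $a_{\tp{s_v}{k_v}} = 0$ and $a_{\tp{i^*}{k_v}} = c(v)$. Hence $\mc{A}_{\text{feas}} \neq \emptyset$ if and only if $G$ is $k$-colorable, and any polynomial-time algorithm for CMMI decides $k$-COLORABILITY in polynomial time, establishing NP-hardness.

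The main obstacle I foresee is a subtlety in how delay is counted: waiting at the first intersection of a route is not charged against the objective, so without intervention a car could ``hide'' its delay at the source and arrive at $i^*$ just-in-time, driving the conflict-induced delay to zero and collapsing the reduction. My remedy of collapsing $\mc{A}_{\tp{s_v}{k_v}}$ to the singleton $\{0\}$ pins each car's arrival time at $i^*$ to $1$ and turns the conflict structure at $i^*$ into a genuine coloring bottleneck. The remaining tasks—verifying that $\mc{X}_{i^*}$ can realize an arbitrary conflict graph and that the three pre-satisfied constraints are indeed trivial under the chosen bounds—are routine thanks to the private source/destination design.
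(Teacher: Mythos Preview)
Your argument is correct but follows a genuinely different route from the paper. The paper reduces from unit-time job shop scheduling with the mean-finishing-time objective: jobs become cars, machines become intersections, task chains become routes, all edge lengths are set to $0$, and $T$, $T_{UB}$, $K_{UB}$ are taken large enough to vacate the three bounding constraints. Since every JSP instance is trivially feasible, that reduction shows hardness of the \emph{optimization} itself over a guaranteed-nonempty $\mc{A}_{\text{feas}}$, and it needs only the simple ``all paths at an intersection cross'' convention rather than any bespoke $\mc{X}_i$.

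Your reduction instead makes the \emph{feasibility} question hard: a single high-degree hub $i^*$ with a tailored $\mc{X}_{i^*}$ encodes an arbitrary conflict graph, and pinning departures via singleton domains forces $h_{\text{col}}$ to simulate $k$-coloring exactly. This is shorter and, as a bonus, shows that even single-intersection instances are hard; on the other hand it leans heavily on the abstract freedom the model grants in choosing $\mc{X}_i$ and in shrinking $\mc{A}_\ik$ to arbitrary subsets. One caveat worth noting: your proof certifies hardness of the task ``output an optimum or report $\mc{A}_{\text{feas}}=\emptyset$''; if the statement is read as a promise problem (optimize over an instance already known to be feasible), the JSP route still applies directly, whereas yours would need the easy patch of enlarging $\mc{A}_{\tp{i^*}{k_v}}$ to $\{1,\ldots,|V|\}$ and asking whether the minimum delay is at most a threshold tied to $k$.
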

This is proved by polynomial-time reduction from an NP-hard job shop scheduling problem to CMMI (in Appendix~\ref{sec:appproof}).
\color{black}

\begin{figure}[t]
\centering
\includegraphics[width=0.5\columnwidth]{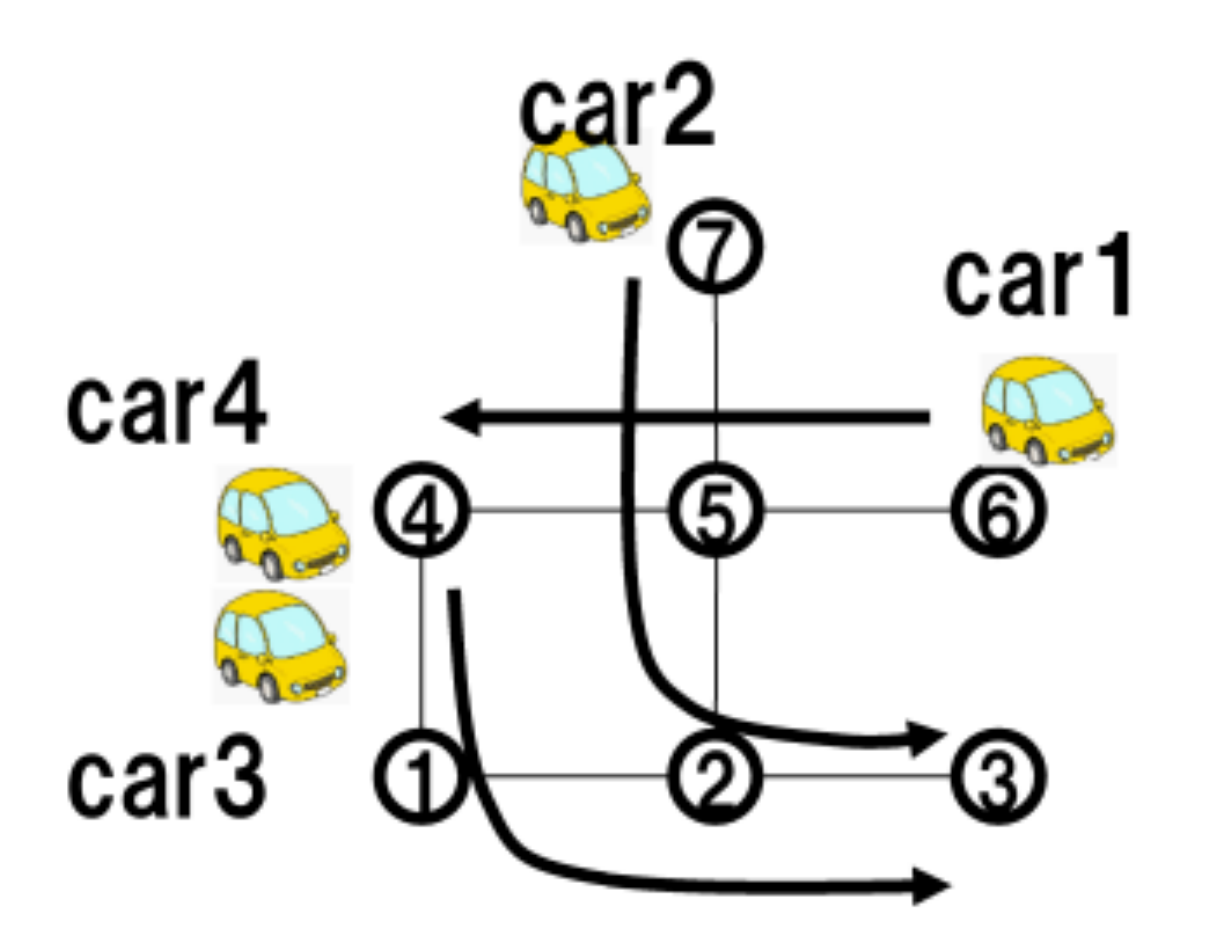}
\caption{Example of CMMI where a cascade of braking can happen due to the externality between multiple intersections: $T^o_1=T^o_2=0, T^o_3=1,T^o_4=2$ and $L_e=5$ for all $e$. \modified{For simplicity, all paths at each intersection are assumed to be crossing (i.e., only one car on each intersection at a time is allowed).
}}
\label{fig:counterexp}
\end{figure}

\subsection{Preliminaries of DCOP and DCOP games}
\label{sec:dcop}

\ijcai{We now formulate a distributed version of CMMI based on DCOP games \cite{chapman2011benchmarking}.} A DCOP game is a game $\langle \mc{N},\mc{A},u \rangle$ corresponding to DCOP $\langle \mc{N},\mc{A},\mc{C} \rangle$, where $\mc{N}$ is a set of agents, $\mc{A}_j$ is the action space of agent $j$, $\mc{A}=\prod_{j \in \mc{N}}\mc{A}_j$ is a joint action space of the agents and $\mc{C}=\{c_1,c_2,\ldots\}$ is a set of constraints. A constraint $c=(\mc{N}_c,f_c)\in\mc{C}$ is defined using $f_c:\prod_{j \in \mc{N}_c}\mc{A}_j\to\mathbb{R}_+$, which is a cost function of the actions by a group of agents, or \emph{neighborhood}, $\mc{N}_c\subseteq\mc{N}$.

Note that the constraints in DCOP and DCOP games are not hard but soft, i.e. using penalties. Hence, additional techniques are required when we apply DCOP / DCOP games to a problem with hard constraints that must be satisfied. 
A DCOP solution is denoted by joint action $a=(a_1,\ldots,a_n)\in\mc{A}$ and we denote a projection of the solution for constraint $c$ as $a^c=(a_j|j\in\mc{N}_c)$. We sometimes denote $f_c(a)=f_c(a^c)$
and also denote $a_{-j}\in\mc{A}_{-j}=\prod_{l\not=j \in \mc{N}}\mc{A}_l$. 
$u=(u_1,\ldots,u_n)$ is a list of utility functions of the agents. Each agent $j\in\mc{N}$ is involved in a set of constraints $\mc{C}_j=\{c|j \in \mc{N}_c\}$, and has a utility function $u_j(a)=-\sum_{c \in \mc{C}_j}f_c(a^c)$. 
The social welfare is defined as $J(a)=\sum_{j \in \mc{N}}u_i(a)$. It is known that every DCOP game is a potential game with potential function $J(a)$ \cite{chapman2011benchmarking}, and has the following desirable property. A best response update (BRU) of an agent is an update of its action assuming the others stay constant, so as to maximize its utility. 
It is known that a finite sequence of unilateral BRUs, where one agent $j$ is selected and executes a BRU for each iteration, always converges to a pure Nash equilibrium $a$ \cite{monderer1996potential}, which satisfies, $\forall j\in\mc{N}, \forall a'_j\in\mc{A}_j,$ and $\forall a_{-j} \in \mc{A}_{-j}$:  
%
%
\begin{equation}
u_j(a_j,a_{-j})\geq u_j(a'_j,a_{-j}).
\label{eq:ne}
\end{equation}

\color{\newcolor}
\subsection{CMMI as a DCOP game}
\label{sec:cmmidcop}
\color{black}

\ijcai{Now we transform the centralized CMMI to the CMMI game, which is a DCOP game.} Since the performance and computational / communication complexity of DCOP games depend on the definition of the agents and the size of their neighborhood, we compare three possible definitions of agents: intersection agents ($\mc{N}=\mc{I}$), car agents ($\mc{N}=\mc{K}$) and atomic agents 
\ijcai{($\mc{N}=\mc{IK}$).}
\ijcai{The action space of atomic agent $\ik$ is the same $\mc{A}_\ik$ defined in Section~\ref{sec:cmmi}. The action space of intersection agent $i$ and car agent $k$ are defined as $\mc{A}_i=\prod_{k|i\in r_k^-}\mc{A}_\ik$ and $\mc{A}_k=\prod_{i\in r_k^-}\mc{A}_\ik$, respectively. 
Note that $\mc{A}_\ik$ is considerably smaller than $\mc{A}_i$ or $\mc{A}_k$.} 

We denote the index function as $\chi$, where $\chi(True)=1, \chi(False)=0$. Also, let $P_{\text{fwd}}$, $P_{\text{col}}$, $P_{\text{none}}$, $P_{2}$ and $P_{1}$ denote the weight constants to prioritise the constraints. These penalties are incurred if the constraints are violated. 
\color{\newcolor}
Then CMMI (\ref{eq:cmmi})-(\ref{eq:hcol}) is converted to constraints of a DCOP game. 
For example, in the case of atomic agents, hard constraint $h_{\text{wb}}^{\ik}$ corresponds to soft constraint $f_{\text{wb}}^{\ik}(a)=\chi(\neg h_{\text{wb}}^{\ik})*P_{1}$, with neighbor $~\mc{N}_{\text{wb}}^{\ik}=\{\tp{\prev(i,k)}{k},\ik\}$ that includes agents involved in $f_{\text{wb}}^{\ik}(a)$. The objective $D(a)$ is also decomposed into a constraint for each agent $f_{\text{delay}}^{\ik}(a)=w_\ik$. As for the other types of agents, the set $\mc{C}$ is same and $f_c$ and $\mc{N}_c$
are the aggregation of composing atomic agents. For example, the soft constraints of all intersections on the route are aggregated, in case of a car agent.
All differences in $u_j$ and $J$ come from the aggregation, which makes the computational complexity large.
Due to the space limit, the full formulation of the three agent models are provided in Appendix~\ref{sec:appconst}.
\color{black}

The minimization of  $D(a)$ coincides with maximizing $J(a)$ in case of car agents as follows:

\begin{lemma}
In the case of car agents, $D(a)=-J(a)$ if $a\in\mc{A}_{\text{feas}}$.
\label{thm:objective}
\end{lemma}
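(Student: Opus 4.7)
The approach is to exploit feasibility to annihilate every penalty term in the social welfare, leaving only the delay contributions, which sum to exactly $D(a)$.

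First, I would expand $J(a) = \sum_{k \in \mc{K}} u_k(a) = -\sum_{k} \sum_{c \in \mc{C}_k} f_c(a^c)$ using the definition of a car agent's utility. By the aggregation rule for car agents described in Section~\ref{sec:cmmidcop} and made explicit in Appendix~\ref{sec:appconst}, the constraint set $\mc{C}_k$ contains the atomic delay contributions $f_{\text{delay}}^{\ik}(a) = w_\ik$ for every $i \in r_k^-$, together with the soft-penalty versions of the hard constraints $h_{\text{wb}}^{\ik}$, $h_{\text{dep}}^k$, $h_{\text{arr}}^k$, and of the shared constraints $h_{\text{lcap}}^{\tp{t}{e}}$ and $h_{\text{col}}^i$ in which car $k$ participates.

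Second, I would invoke the hypothesis $a \in \mc{A}_{\text{feas}}$. Each soft penalty has the form $\chi(\neg h)\cdot P$ and thus vanishes whenever the corresponding hard constraint $h$ is satisfied. Since (\ref{eq:hwb})--(\ref{eq:hcol}) all hold at $a$, every penalty contribution to $u_k(a)$ is zero, so $u_k(a) = -\sum_{i \in r_k^-} w_\ik = -D_k(a)$. Summing over $k \in \mc{K}$ then yields $J(a) = -\sum_k D_k(a) = -D(a)$, which is equivalent to the claim.

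The step I would scrutinise most is the bookkeeping of shared constraints. A term such as $f_{\text{col}}^i$ or $f_{\text{lcap}}^{\tp{t}{e}}$ involves several cars at once, so it sits in $\mc{C}_k$ for each of them and would in principle be multiply counted in $\sum_k \sum_{c \in \mc{C}_k} f_c(a^c)$. This multiplicity is harmless here only because every such shared penalty vanishes at a feasible $a$. What makes the identity hold exactly, rather than up to a multiplicative factor, is that the delay constraint $f_{\text{delay}}^{\ik}$ depends solely on $a_\ik$ and $a_{\tp{\prev(i,k)}{k}}$, both actions of the single car $k$; hence $f_{\text{delay}}^{\ik}$ belongs to $\mc{C}_k$ for exactly one agent and is not double counted, so the surviving sum reproduces $D(a) = \sum_k D_k(a)$ on the nose.
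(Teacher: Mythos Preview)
Your argument is correct and follows the same route as the paper: use feasibility to kill every soft-constraint penalty so that each car's utility reduces to its negative delay, then sum over cars. Your extra bookkeeping about multiply-counted shared constraints (e.g.\ $f_{\text{col}}^i$) being harmless because they vanish at feasible $a$, and the observation that the delay term is \emph{not} shared because both $a_\ik$ and $a_{\tp{\prev(i,k)}{k}}$ belong to the same car agent, is exactly the point the paper makes in the remark immediately following the proof when contrasting with atomic and intersection agents; you have simply folded that remark into the proof itself.
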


\begin{proof}
If $a$ is feasible, \modified{the utility of car agent $j$ is} $u_j(a)=-\sum_{c\in\mc{C}_j}f_c(a)=f_{\text{delay}}(a)$.
In case of $\mc{N}=\mc{K}$, $f_{\text{delay}}(a)=\sum_{i\in r_k} w_\ik$.
Then, $J(a)=\sum_{k\in\mc{K}}u_j(a)=-\sum_{k\in\mc{K}}\sum_{i\in r_k} w_\ik=-D(a)$.
\end{proof}

In case of intersection agents and atomic agents, $J(a)$ is not \modified{exactly} equal to $-D(a)$ but a weighted sum of $w_\ik$. \modified{This is because $\mc{N}_{\text{delay}}$ contains two agents, $\tp{\prev(i,k)}{k}$ and $\tp{i}{k}$, and their waiting times are double counted when computing the utlities of those agents, in case of atomic agents for example. However, we can still use $-J(a)$ as an approximation of $D(a)$.}

\ijcai{Note that $\mc{A}_{\text{feas}}$ is the set of feasible solutions where the total penalty of all \ijcai{soft} constraints $\mc{C}\setminus\{c_{\text{delay}}\}$ is zero. Also, $\mc{A}_{\text{feas}}$ is the same for all types of agents because if the total penalty is zero for a type of the agents, it remains zero if we change the way of aggregation.}

\ijcai{Since CMMI is NP-\modified{hard}, we do not compute the optimal solution. Instead, a pure Nash equilibrium can be computed using the BRU dynamics. 
However, the feasibility of pure Nash equilibria is not guaranteed by the trivial application of BRU, because they are local optima of $J(a)$. Even worse, it is known that finding a pure Nash equilibrium in a potential game is PLS-complete \cite{fabrikant2004complexity}. We address this issue in Section \ref{sec:theory}.}

\section{Algorithm and Theoretical Analysis}
\label{sec:theory}

\ijcai{To address the difficulty of CMMI discussed above, we propose a distributed solution in the sense that agents do not have to know all the routes of other agents, and they only require the information of available time slots for each intersection at the moment. 
First, we show that our algorithm computes a feasible solution in polynomial time, and the communication complexity is not very intensive either.
We then show a negative result that the solution can be arbitrarily inefficient compared to an optimal solution.} 
Last, we describe how to improve the performance of our algorithm in practice.
\ijcai{Due to the space limit, most of the sub-routines and details of the practical implementation are found in Appendices~\ref{sec:appalgo} and~\ref{sec:appimple}}.

\subsection{Polynomial time feasible algorithm for CMMI}
\label{sec:algo}


To address the challenges above, we propose a novel algorithm \ijcai{based on BRU. Recall that BRU does not guarantee feasible solutions due to the soft constraints.} Since we want to minimize $D(a)$ among feasible solutions, we assume the magnitude of the penalty constants as follows:
\begin{equation}
\left.
\begin{array}{l}
P_{\text{fwd}}\gg P_{\text{col}} \gg P_{\text{none}} \gg P_{2} \gg P_{1}\gg T.
\end{array}
\right.
\label{eq:penalty}
\end{equation}

\begin{algorithm}[t]                 
	\caption{Best response update with downstream reset}
	\label{alg:brudr}
	\begin{algorithmic}[1]
	\Procedure{$a=$BRUDR}{$a,bestU$}
		\While{$a$ violates (\ref{eq:ne})} \label{brudr:ne}
		\State{$\hat{\Delta}=0,\hat{a}=\emptyset$}
    	\For{$j \in \mc{N}$} \label{brudr:itr1}
    	\For{$a'_j \in \mc{A}_j$} \label{brudr:space}
    	\State{$a''=\textproc{DownReset}(j,a,(a'_j,a_{-j}))$} \label{brudr:reset}
    	\State{$\Delta u_j=u_j(a'')-u_j(a)$} \label{brudr:improve}
    	\If{$\Delta u_j>0$}
    	\If{not $bestU$}
		\State{$a=a''$} \label{brudr:upd1}
		\State{break}
    	\ElsIf{$\Delta u_j>\hat{\Delta}$}
	    \State{$\hat{\Delta}=\Delta u_j,\hat{a}=a''$} \label{brudr:bestupd}
		\EndIf
		\EndIf
		\EndFor
		\EndFor
    	\If{$bestU$ and $\hat{a}\not= None$}
    	\State{$a=\hat{a}$} \label{brudr:upd2}
    	\EndIf
		\EndWhile
		\State{Return $a$} \label{brudr:ret}
	\EndProcedure
	\end{algorithmic}
\end{algorithm}

Since BRU tries to find a local maximum of $J(a)$,
(\ref{eq:penalty}) gives priority to obtaining a feasible solution before minimizing $D(a)$. 
\ijcai{In effect, we apply the ``big M'' method to DCOPs, using it to convert hard constraints to large penalty functions, or soft constraints, in the DCOP objective function.}
Our algorithm, BRU with downstream reset (\textproc{BRUDR}) is shown in Algorithm \ref{alg:brudr}.
Briefly, every time the algorithm allocates a time slot to a car at an intersection, it clears all allocations of the car at  downstream intersections to prevent the violation of $h_{\text{wb}}$.
We initialize the joint action as being unallocated, $a=\emptyset$, which means $a_\ik=\emptyset, \forall \ik \in \mc{I}\times\mc{K}$. 

We consider two variants of the algorithm, one which always updates the agent with the largest improvement in utility (indicated by $bestU=True$) and one which updates any agent ($bestU=False$). The algorithm terminates when the joint action converges to a pure Nash equilibrium (line \ref{brudr:ne}). Otherwise, it computes the improvement in utility of the update for each action of each agent (line \ref{brudr:itr1}-\ref{brudr:improve}). \modified{Without loss of generality, the agent loop is ordered by agent ID (line \ref{brudr:itr1}). Also, the action loop is in ascending order of the value $a_j'$ (line \ref{brudr:space}).}
The types of agents make a difference here because the different size of neighborhood causes a different aggregation of the costs. In particular, only car agents have a neighborhood over all intersections on their own routes. After that, different from the normal BRU, the algorithm resets the allocation of the downstream intersections of the current agent $j$ to $\emptyset$ using the function \textproc{DownReset} (line \ref{brudr:reset}). \modified{Without this process, a car can violate $h_{\text{wb}}$ by creating an infeasible schedule between different intersections, especially in the case of atomic agents with a non-empty initial allocation.}
If $bestU=False$, the joint action is updated when there is an improvement (line \ref{brudr:upd1}). If $bestU=True$, the algorithm finds an update with the largest improvement (line \ref{brudr:bestupd}), and applies the update (line \ref{brudr:upd2}). \modified{Finally, the algorithm returns a Nash equilibrium (line \ref{brudr:ret}).} 
With the resetting, BRUDR works as a greedy algorithm that allocates time slots from the agents in the upstream, until all agents are allocated. BRUDR guarantees the feasibility of the solution as follows.

\begin{lemma}
In cases of the car agents and atomic agents, BRUDR converges to a feasible pure Nash equilibrium, with each agent updating the action at most once, if all the solutions in $\mc{A}$ satisfy $h_{\text{wb}}, h_{\text{lcap}}$ and $h_{\text{arr}}$.
\label{thm:feas}
\end{lemma}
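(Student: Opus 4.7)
My plan is to prove the two claims in sequence: (i) after a single pass of the outer agent loop, each agent has committed to a feasible allocation, and (ii) the resulting joint action is a pure Nash equilibrium, so no agent updates a second time.

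The core sub-lemma for (i) is that any unallocated agent's best response strictly prefers any hard-constraint-feasible allocation over remaining at $\emptyset$. This follows directly from the penalty ordering (\ref{eq:penalty}), which places $P_{\text{none}}$ strictly above $P_{2}, P_{1}$, and $T$: staying at $\emptyset$ pays $P_{\text{none}}$, whereas a feasible allocation pays at most $T$ in delay. The hypothesis that every action in $\mc{A}$ satisfies $h_{\text{wb}}$, $h_{\text{lcap}}$, and $h_{\text{arr}}$ removes those three hard constraints from consideration entirely, leaving $h_{\text{dep}}$, the forward ordering implicit in $w_\ik \geq 0$, and $h_{\text{col}}$ to satisfy. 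Since only finitely many slots are blocked by already-allocated cars at the intersection, at least one conflict-free slot exists above the predecessor's exit time $a_{\prev(i,k),k} + L_{(\prev(i,k),i)}$ within the horizon permitted by $T_{UB}$.

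For (ii) I would use the WLOG convention that the agent loop is ordered by agent ID, with IDs chosen to respect each car's route order (upstream first). I would then induct on this order, maintaining the invariant that at the moment agent $j$ is processed, every earlier agent holds a committed feasible allocation, every upstream predecessor of $j$ on its route is already allocated, and every downstream agent of $j$ on the same route is still $\emptyset$. Under this invariant, the call to \textproc{DownReset} on line \ref{brudr:reset} is a no-op on already-empty downstream entries, and by the sub-lemma $j$ commits to a feasible slot without disturbing any earlier commitment. In the car-agent case the same argument applies with the induction taken over cars, since a single best-response update fixes a consistent allocation for all intersections of $r_k^-$ simultaneously. After one pass the joint action is feasible; any subsequent deviation by any agent either reintroduces a hard-constraint violation (incurring a penalty dominating any delay gain) or is non-improving, so the while loop exits on line \ref{brudr:ret}, returning a feasible pure Nash equilibrium.

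The main obstacle I anticipate is discharging the existence-of-feasible-slot claim used in the sub-lemma: one must argue that for any predecessor allocation there is always some admissible $t \in [a_{\prev(i,k),k} + L_{(\prev(i,k),i)},\, a_{\prev(i,k),k} + L_{(\prev(i,k),i)} + T_{UB}]$ that both satisfies $h_{\text{arr}}$ and avoids the finite set of slots taken by conflicting cars at $i$. This is precisely where the standing hypothesis on $\mc{A}$ is exploited, since it encodes the assumption that the horizon and capacities are generous enough to accommodate any route-consistent ordering of allocations. A secondary technical point is verifying that \textproc{DownReset}, by construction, only clears descendants of the updating agent and therefore never undoes any earlier commitment, which is exactly what lets one pass of the agent loop suffice and keeps the update count at one per agent.
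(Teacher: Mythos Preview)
Your overall architecture matches the paper's: both argue that agents commit in an upstream-to-downstream order, each exactly once, by picking the earliest slot that avoids collision (which always exists because the three bounding constraints are assumed slack), and that later commitments never invalidate earlier ones, yielding a feasible Nash equilibrium.

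The genuine gap is in how you enforce the upstream-first order. You rely on a ``WLOG'' relabelling of agent IDs so that the loop on line~\ref{brudr:itr1} visits predecessors before successors. That is not a valid WLOG here: the algorithm's behaviour \emph{depends} on the loop order, and the lemma must hold for BRUDR as specified, with an arbitrary ID ordering (and in the $bestU=\text{True}$ branch the update order is not even driven by the loop order but by $\hat\Delta$). Concretely, without any further mechanism, if a downstream atomic agent $(i,k)$ is visited while $a_{(\gamma(i,k),k)}=\emptyset$, your own sub-lemma applies: allocating relieves $P_{\text{none}}$, so the agent commits. When the predecessor later commits, \textproc{DownReset} wipes $(i,k)$ and it must commit again --- two updates, contradicting the claim.

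What actually blocks this in the paper is the constraint $h_{\text{fwd}}$ (Appendix, Eq.~(\ref{eq:hfwd})) and its penalty $P_{\text{fwd}}$, which sits strictly above $P_{\text{none}}$ in the ordering~(\ref{eq:penalty}). Allocating at $(i,k)$ while the predecessor is still $\emptyset$ incurs $P_{\text{fwd}}\gg P_{\text{none}}$, so the best response for such an agent is to \emph{stay} at $\emptyset$ until the predecessor is allocated. This is precisely what makes the upstream-first order emerge regardless of the loop ordering or the $bestU$ flag, and it is the point the paper's proof leads with. Your phrase ``the forward ordering implicit in $w_{(i,k)}\geq 0$'' conflates this with the lower bound in $h_{\text{wb}}$, which is assumed satisfied by hypothesis and therefore cannot do the work. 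Once you replace the WLOG step with the $P_{\text{fwd}}$ argument, the rest of your outline (existence of a conflict-free slot by delaying, \textproc{DownReset} only touching still-empty descendants, and stability of earlier commitments) goes through and coincides with the paper's proof.
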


Briefly, this is because agents can always find the earliest allocation that satisfies the \ijcai{soft} constraints, because agents can delay their trips as much as they need \modified{assuming that the 3 bounding constraints, $h_{\text{wb}}, h_{\text{lcap}}$ and $h_{\text{arr}}$ are always satisfied. In practice, this assumption holds when the traffic is not so dense.}  In case of intersection agents, however, the resetting mechanism can unallocate the other cars on their downstream route, making the feasible updates impossible.


Note that Lemma \ref{thm:feas} means that the efficiency of BRUDR depends only on the allocation order of the agents, because all agents choose their allocation only once. As for the computational complexity of BRUDR, it suffers from an exponential size of the search space for the cases of intersection agents and car agents. To address this, we restrict the action space. For each intersection on the route, the car checks the feasibility of each time slot, and allocates the earliest one that satisfies all the \ijcai{soft} constraints.
Now we can solve CMMI in polynomial time as follows.

\begin{theorem}
With the restricted action space, BRUDR converges to a feasible pure Nash equilibrium in polynomial time if all solutions in $\mc{A}$ satisfy $h_{\text{wb}}, h_{\text{lcap}}$ and $h_{\text{arr}}$.
\label{thm:poly}
\end{theorem}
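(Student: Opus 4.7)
The plan is to combine Lemma \ref{thm:feas} with a simple accounting argument on the total work performed by BRUDR. Lemma \ref{thm:feas} tells us that, under the bounding-constraint assumptions, each of the at most $|\mc{N}| \le |\mc{IK}| = O(IK)$ agents updates its action at most once before the algorithm reaches a feasible Nash equilibrium. Hence the outer \textbf{while} loop executes at most $O(|\mc{N}|)$ productive passes plus one terminating pass, and each pass iterates over $|\mc{N}|$ agents in line \ref{brudr:itr1}. The remaining task is to bound the work done for a single agent in one pass by a polynomial in $I$, $K$, and $T$.

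First I would bound the size of the searched action space. For atomic agents, $\mc{A}_\ik = \mc{T} \cup \{\emptyset\}$ already has size $O(T)$. For intersection and car agents the unrestricted space is an exponential Cartesian product, but the restriction stated just before the theorem replaces it by a sequential search: for each of the at most $I$ intersections on the route, the agent scans $\mc{A}_\ik$ in ascending order and commits to the earliest time slot satisfying all soft constraints. Hence the best-response search inspects $O(IT)$ candidate slots per agent. Evaluating one candidate requires (i)~checking the constraints in $\mc{C}_j$ --- waiting-time, capacity, departure, arrival, and collision --- each with a neighborhood of size $O(K)$ at worst, and (ii)~running \textproc{DownReset}, which clears at most $I$ downstream entries of $a$. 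Both steps take $\text{poly}(I,K)$ time, so multiplying gives a total running time of $O(|\mc{N}|^2) \cdot O(IT) \cdot \text{poly}(I,K) = \text{poly}(I,K,T)$.

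The main obstacle is verifying that Lemma \ref{thm:feas} still applies under the restricted action space: by forbidding an agent from selecting anything other than its earliest feasible slot, one might worry that a genuine best response is excluded or that convergence to a Nash equilibrium of the restricted game no longer implies the feasibility guarantee of Lemma \ref{thm:feas}. The penalty ordering (\ref{eq:penalty}) dissolves both concerns: once the dominant penalties are zero, the utility is strictly monotone in $a_\ik$ through the $-P_1 w_\ik$ delay term, so the earliest feasible slot is uniquely optimal within the restricted action space, and the ``each agent updates at most once'' property of Lemma \ref{thm:feas} transports to the restricted setting unchanged. This closes the polynomial-time bound.
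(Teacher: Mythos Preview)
Your proposal is correct and follows essentially the same approach as the paper's own proof: invoke Lemma~\ref{thm:feas} to cap the number of updates, then bound the per-update work (restricted action space $O(IT)$, polynomial constraint evaluation, polynomial \textproc{DownReset}). Your accounting is in fact more careful than the paper's, which simply asserts a bound of $T\cdot I\cdot K$ on the iterations and a polynomial bound on the action space and $f_c$ evaluation without elaboration. One minor slip: the delay term is $w_\ik$, not $P_1 w_\ik$; and your extra paragraph verifying that Lemma~\ref{thm:feas} survives the restriction to earliest-feasible slots is a point the paper silently assumes rather than argues.
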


\begin{proof}
This follows from Lemma \ref{thm:feas}. Since all agents update at most once, the number of \modified{the iterations} in BRUDR is less than \modified{$T*I*K$}. The size of the action space is polynomial in \modified{$I*T$} multiplied by the complexity of calculating $f_c$. As in the definition of constraints, the complexity of calculating $f_c, \forall c\in\mc{C}$ is also polynomial.
\end{proof}

\modified{Theorem \ref{thm:poly} shows that our algorithm relaxes the problem of finding a Nash equilibrium to an easier complexity class than PLS-complete, by assuming the three constraints $h_{\text{wb}}, h_{\text{lcap}}$ and $h_{\text{arr}}$ are satisfied.} In practice, the algorithm can be accelerated even further by replacing the best response  update of a single player with the stochastic response updates of multiple players. For each iteration, all players \modified{simultaneously} try to choose their best response to the joint actions in the former iteration, but succeed only with some probability $p$. This is also known as the distributed stochastic algorithm (DSA) and it almost surely converges to a Nash equilibrium in potential games and DCOP games \cite{chapman2011benchmarking}.

\ijcai{Note that it is easy to apply our approach to dynamic settings, because of the greedy nature of the distributed algorithms. The agents can simply run \textproc{BRUDR} again when a new vehicle appears or some vehicles change their routes. Since it is the same as if the vehicles existed with the new routes from the beginning (but which had not yet been updated), the same proof for Theorem \ref{thm:poly} would hold as well. Also, the communication complexity is not very intensive in our approach. See summary in Appendix \ref{sec:appcomm}.}

\ijcai{Despite of the scalability above, the efficiency of the solution can be arbitrarily worse than the optimal one. We discuss the detail in Appendix \ref{sec:apppoa}, but Figure \ref{fig:counterexp} provides an illustrative example. In the optimal solution, $D(a_{so})=1$ where all cars do not have to brake except for car 1. However, \textproc{BRUDR} can output a Nash equilibrium where all cars except for car 1 brake, resulting in $D(a_{ne})=3$. Since we can add more cars after car 4, $D(a_{ne})$ can be arbitrarily worse than $D(a_{so})$.}

\subsection{Personal Best Actions}
\label{sec:pba}

Although the inefficiency of CMMI cannot be bounded, in practice, it is still possible to coordinate the intersections and improve efficiency. We propose a \modified{technique} that can be applied to a wide range of multiagent coordination problems. We can improve the efficiency by just replacing the initial input of BRUDR with the Personal Best Actions (PBA).

\begin{equation}
\left.
\begin{array}{l}
a_j=\underset{\substack{a'_j\in \mc{A}_j}}{\argmax} u_j(a'_j,\emptyset_{-j}), \forall j\in\mc{N},
\end{array}
\right.
\label{eq:pba}
\end{equation}

\noindent where $\emptyset_{-j}$ means $a_l=\emptyset, \forall l \in -j$. In CMMI, this means that agents choose the earliest time slots without worrying about collisions. 

PBA can improve the efficiency compared to the default BRUDR (with initial solution $a=\emptyset$) on average across all instances of CMMI. We show this empirically in the later experimental section, but here we briefly explain why PBA performs better than the default BRUDR. Let $\mc{F}$ be the finite set of all the instances of CMMI. Then we split  $\mc{F}$ into the two disjoint subsets as $\mc{F}=\mc{F}_1\sqcup\mc{F}_2$. 
The set $\mc{F}_1$ consists of special cases where a single update of an agent from PBA can reach the best Nash equilibrium.
All of the other instances fall into $\mc{F}_2$.
In case of $\mc{F}_1$, BRUDR starting from PBA always outputs the best Nash equilibrium.
Note that the same thing never happens with the default BRUDR, because it cannot terminate in one step from $a=\emptyset$ when $N\geq 2$. Then, BRUDR with PBA always outperforms the default BRUDR for $\mc{F}_1$. For $\mc{F}_2$, both BRUDR with PBA and the default BRUDR cannot guarantee the optimality, and show the same performance on average. Then, BRUDR with PBA always outperforms the default BRUDR across all instances $\mc{F}$ on average. Since the idea of PBA is simple, we can apply it to other resource allocation problems.

\section{Experimental Results}
\label{sec:exp}

Given that we do not have any efficiency bounds, we evaluate BRUDR using simulations \ijcai{with the data set of vehicle trajectories in Athens that were collected as part of the pNEUMA field experiment \cite{barmpounakis2020new}
\footnote[2]{We thank EPFL Urban Transport Systems Laboratory for the data source: pNEUMA – open-traffic.epfl.ch}.
We analyse a zone of Athens including 24 intersections and extract the vehicle trajectories that pass those intersections to compute the routes of the vehicles. Based on this, we create a traffic demand model as a Poisson distribution for each route, with the mean parameter estimated from the route data. We generate cars for each time slot randomly according to the Poisson distribution, until the total number of cars reaches the parameter $K$.} We simulate different levels of car density, by changing the parameter $K$.
We denote the default BRUDR and BRUDR with PBA as $default$ and $pba$, respectively.
Since our interest is in the problem instances where cars conflict with each other, we exclude cases without conflicts, i.e., where $D(a_{so})=0$. We also exclude the easy cases where both $pba$ and $default$ reach $a_{so}$.

In what follows we first compare the performance of BRUDR and the traditional first-come-first-served AIM \modified{\cite{dresner2008multiagent}} with different settings. We then evaluate the scalability of the algorithm to larger settings. All the error bars in the figures show 95\% confidence intervals.

\modified{For reproducibility of the experiments, the details of the setting are in Appendix \ref{sec:appexp}. The code is also available in the supplementary material.}

\subsection{Small Settings}
\label{sec:expsmall}

We compare the performance of BRUDR with an optimal solution by changing the definitions of agents ($\mc{K}$, $\mc{I}$ and $\mc{I}\times\mc{K}$), with different parameters $K\in\{20,30\}$, and different initial allocations (PBA and $a=\emptyset$). \ijcai{Since finding an optimal solution is NP-hard, we only focus on the small size problems.} Note that the case of the atomic agents with the initial actions $a=\emptyset$ corresponds to the traditional AIM that uses a FCFS allocation policy \cite{dresner2008multiagent}.
We run 300 simulations for each setting.

Figure \ref{fig:smalleff} shows the  ratio of total delay compared to the optimum one. Since solutions can be infeasible in case of intersection agents, it includes only the result of feasible solutions. \modified{As discussed in Section \ref{sec:pba}}, the delay of $pba$ becomes smaller than the one of $default$ and close to the optimum, in case of the car agents.
Note that there is no clear improvement in case of intersection agents. This is because the improvement requires updates at multiple intersections and cannot be done by a single update at an intersection, so $\mc{F}_1$ is almost empty. Hence, this result shows that the case of car agents enables coordination of multiple intersections and exceeds the performance of the single intersection management (about \ijcai{41}\% \camera{points}
reduction on average\modified{, while it is \ijcai{2.6}\% \camera{points} greater than the delay of optimal solutions \camera{when $K=20$}}). In case of car agents, $pba$ achieves the social optimum even in the pathological case of Example \ref{thm:example}, \modified{maintaining the platoon} (car 3 and 4).
However, the ratio becomes smaller when $K$ increases, in all settings. Since larger $K$ causes more conflicts among cars, this indicates that even the optimum traffic deteriorates by congestion close to Nash equilibria, and there is little room remained to improve.
However, cars can have negative externalities that cause cascades of braking as in Figure \ref{fig:counterexp}, even in large problems. We analyze this aspect further in Appendix \ref{sec:appextern}.

\ijcai{Since the objective of CMMI is minimization of total delay $D(a)$, it can cause unfair allocations. We show an interesting finding that the fairness depends on the agent models.} Figure \ref{fig:smallgini} shows the Gini index, which is a degree of unfairness among cars. 
\camera{PBA with the intersection agents is fairer than PBA with the other agents with 95\% significance,} because all cars that pass an intersection join a neighborhood and share the common utility. 
This indicates there may be a tradeoff between efficiency and fairness, \ijcai{since intersection agents are more fair but are worse in social welfare, while car agents have the highest social welfare but lower fairness. This tradeoff can also be} seen in Example \ref{thm:example}, and as is often observed MAS in allocation problems in general.

\begin{figure}[t]
\centering
\includegraphics[width=0.85\columnwidth]{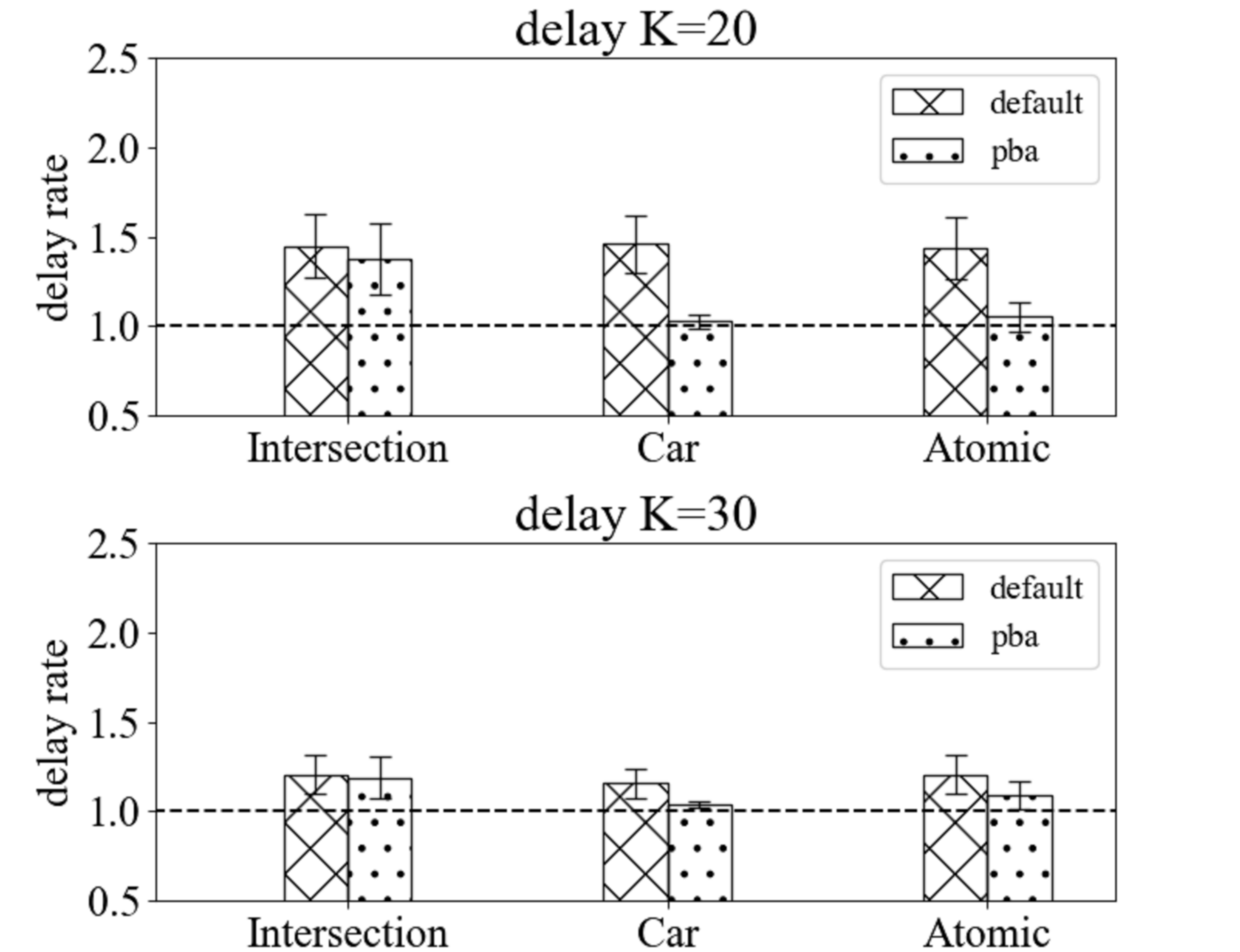}
\caption{The ratio of total delay compared to the optimum one $D(a)/D(a_{so})$. Comparisons between $default$ and $pba$ with different types of agents and density of cars $K/I$.}
\label{fig:smalleff}
\end{figure}

\begin{figure}[t]
\centering
\includegraphics[width=0.85\columnwidth]{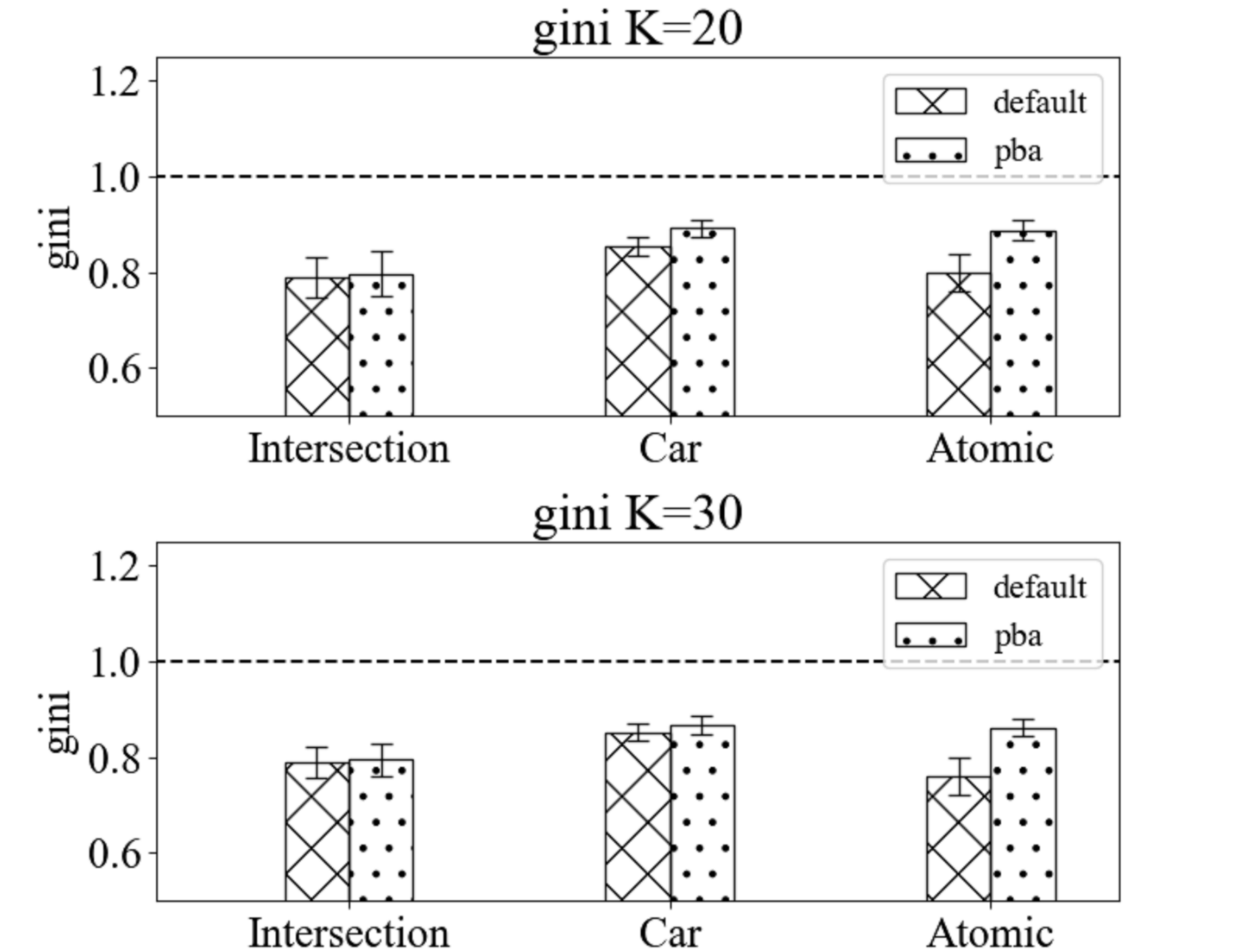}
\caption{Gini index $GI(a)=\sum_{k\in\mc{K}}\sum_{k'\in\mc{K}}|D_k(a)-D_{k'}(a)|/2K|D(a)|$. Comparisons between $default$ and $pba$ with different types of agents and density of cars $K/I$.}
\label{fig:smallgini}
\end{figure}

\subsection{Scalability}
\label{sec:scale}

Next, we evaluate the scalability of BRUDR in case of car agents. We change the number of cars $K\in\{10,20,30,40,50,70,90,110,130\}$, and run 100 simulations for each case. 
Since most of the runtime is spent on computing $c_{\text{lcap}}$, we compare the cases with and without the constraint. Figure \ref{fig:large} shows the runtime of BRUDR.
Though it shows that our algorithm is polynomial time in $K$, the runtime can exceed $10^4$ seconds if the computation of $c_{\text{lcap}}$ is included. However, if we can assume the traffic is not so dense and can omit the computation of $c_{\text{lcap}}$, our algorithm is scalable to more than 100 cars.
Also, PBA contributes to reducing the runtime because all agents are already allocated in the initial solution. 
\ijcai{Note that the current implementation uses a single CPU just for theoretical verification.
However, our polynomial-time algorithm can be easily executed in a distributed fashion and therefore scale, technically.}

\begin{figure}[t]
\centering
\includegraphics[width=0.85\columnwidth]{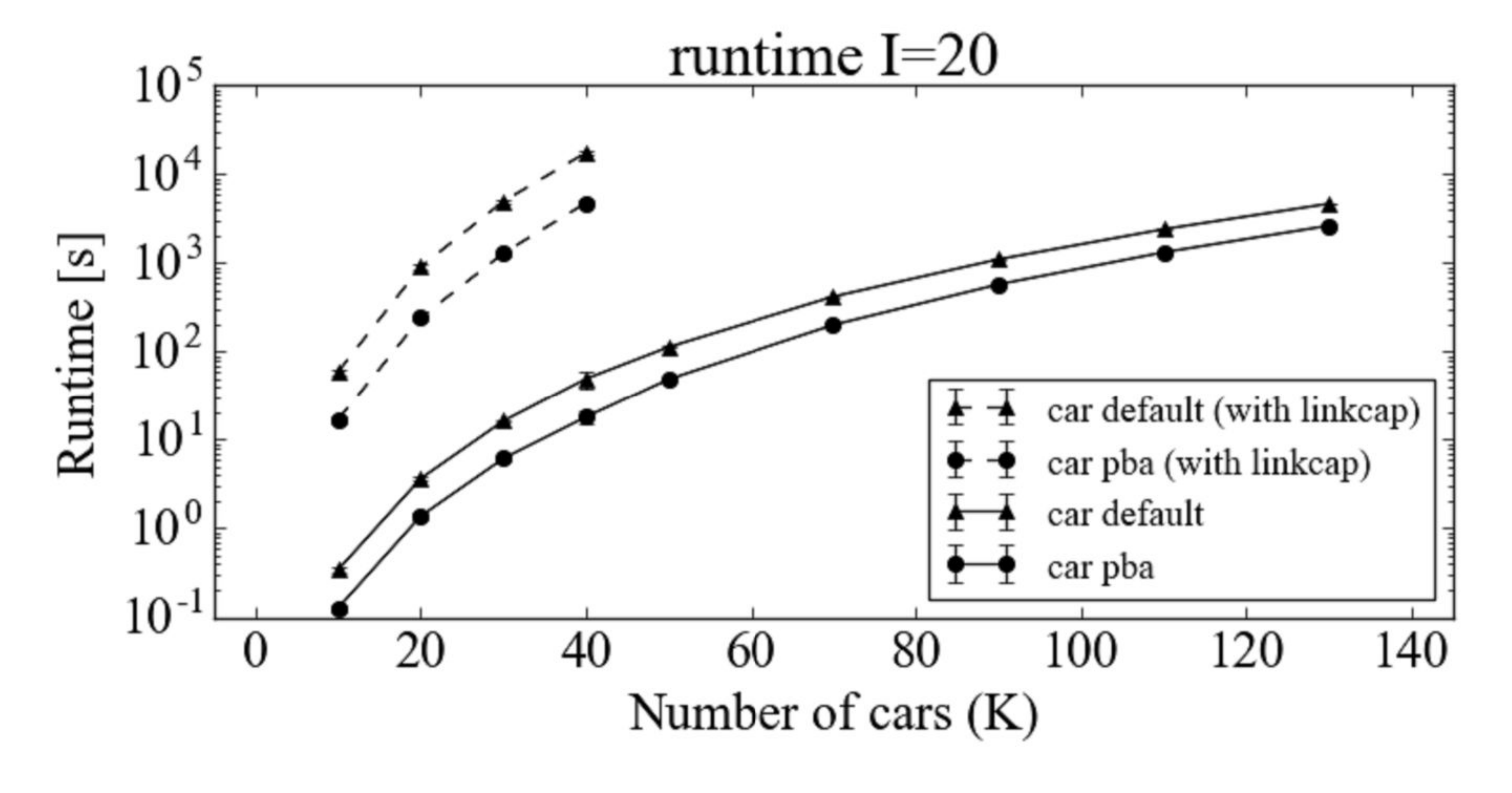}
\caption{Runtime with different number of cars $K$. Comparisons between $default$ and $pba$, with and without $c_{\text{lcap}}$ computation.}
\label{fig:large}
\end{figure}

\section{Conclusions}
\label{sec:conclusion}

We model the coordination problem of multiple intersections with the theory of DCOP. The goal is the theoretical analysis of the problem, and to find an efficient feasible allocation in polynomial time. To this end, we present a polynomial-time algorithm that guarantees to find a feasible solution and we present a novel modification to improve its efficiency in practice. In addition, through numerical simulation using real-world data, we show that our algorithm outperforms traditional single intersection management approaches. \modified{Since our problem is a special case of the job shop scheduling problem, our work may also be applicable to a wide variety of scheduling problems.}
Future directions include an algorithm that bounds the inefficiency with communications among agents, as well as extending the problem \modified{to include vehicle routing rather than assuming fixed routes.}

 
\newpage

\section{Ethics statement}

\ijcai{Our work has positive ethical impacts on society. This is because our algorithm can benefit society and the environment by reducing total travel time and pollution. A possible negative impact is that people might feel unfair because some vehicles have to wait more than the others. We could address this by looking into envy-free algorithm in future work.}
\bibliographystyle{named}
\bibliography{ref}

\if\arxiv1

%


\newpage
\appendix

\color{\newcolor}

\section{Summary of notation}
\label{sec:appnote}

\begin{tabular}{|l|l|}
\hline
Time & $t\in\mc{T}$ \\ 
Intersection & $i\in\mc{I}$ \\ 
Car & $k\in\mc{K}$ \\ 
Edge (road segment) & $e\in\mc{E}$ \\
Length of edge & $L_e$\\
Allocation & $a_\ik \in \mc{A}_\ik \subseteq \mc{T}\cup\{\emptyset\}$\\
Route of car & $r_k = (i_1^k,\ldots, i_{\Omega}^k)$\\
 & $r_k^- = r_k\setminus \{i_{\Omega}^k\}$\\
Departure time of car & $T_k^o$\\
Previous intersection of $i^k_l$ & $\prev(i,k)=i^k_{l-1}$\\
Waiting time & $w_\ik$\\
Total waiting time & $D(a)=\sum_{k\in\mc{K}}\sum_{i\in r_k} w_\ik$\\
All paths passing $i$ & $\mc{P}_i$ \\ 
Path set crossing each other & $x_i\in \mc{X}_i$\\
Constraint of DCOP game & $c=(\mc{N}_c,f_c) \in \mc{C}$\\
Utility function & $u_i$\\
\hline
\end{tabular}

\section{MILP formulation of CMMI}
\label{sec:appmilp}

Variables of MILP $y$ consists of $y_{t,k,e}\in\{0,1\}, ~\forall t\in\mc{T},\forall k\in\mc{K},\forall e\in\mc{E}_k$, 
where $\mc{E}_k=\{(i,j)|i \in r_k^-\}$.
The value of $y_{t,k,e}=1$ 
if and only if $a_\ik=t$. 
The wait time of car $k$ to enter $e=(i,j)$ is: \begin{align*}
c(y,k,e)=&\underset{\substack{t_{out} \in \mc{T}}}{\sum}t_{out}*y_{t_{out},k,e} \\
&-\underset{\substack{t_{in} \in \mc{T}}}{\sum}t_{in}*y_{t_{in},k,(\prev(k,i),i)}-L_{(\prev(k,i),i)}.
\end{align*}
Given this, the MILP formulation is as follows:
\begin{align}
&\underset{\substack{y}}{\min} \underset{\substack{k \in \mc{K}}}{\sum}\underset{\substack{e \in \mc{E}_k}}{\sum}c(y,k,e)  \label{eq:mj} \\
& \text{subject to:} \nonumber\\
&c(y,k,e)\geq 0,~\forall k \in \mc{K}, \forall e\in \mc{E}_k, \label{eq:mwlb} \\
&c(y,k,e)\leq T_{UB},~\forall k \in \mc{K}, \forall e\in \mc{E}_k, \label{eq:mwub} \\
&\underset{\substack{k \in \mc{K}_e}}{\sum} y_{t,k,e}\leq K_{UB}~\forall t \in \mc{T},~\forall e\in \mc{E}, \label{eq:mlcap} \\
&y_{T_k^o,k,e(i_1^k,i_2^k)}=1,~\forall k\in\mc{K}, \label{eq:mdep} \\
&y_{t,k,(i_{\Omega-1}^k,i_{\Omega}^k)}=0,~\forall k\in\mc{K},\forall t>T-L_{(i_{\Omega-1}^k,i_{\Omega}^k)}, \label{eq:marr} \\
&\underset{\substack{(e_{in},e) \in x_i}}{\sum}\underset{\substack{k \in \mc{K}|e_{in},e\in \mc{E}_k}}{\sum}y_{t,k,e}\leq 1~\forall t \in \mc{T},~\forall i\in \mc{I},~\forall x_i \in\mc{X}_i, \label{eq:mcol}\\
&\underset{\substack{t \in \mc{T}}}{\sum}x_{t,k,e}= 1,~\forall k \in \mc{K},~\forall e\in \mc{E}_k,\label{eq:mtex}\\
&\underset{\substack{k \in \mc{K}}}{\sum}x_{t,k,e}\leq 1,~\forall t \in \mc{T},~\forall e\in \mc{E},\label{eq:mvex}\\
&\underset{\substack{e \in \mc{E}}}{\sum}x_{t,k,e}\leq 1,~\forall t \in \mc{T},~\forall k\in \mc{K}.\label{eq:meex}
\end{align}
where $\mc{K}_e=\{k\in\mc{K}|e\in \mc{E}_k\}$. Constraints (\ref{eq:mwlb})-(\ref{eq:mcol}) correspond to (\ref{eq:hwb})-(\ref{eq:hcol}) in Section~\ref{sec:model}. Constraints (\ref{eq:mtex})-(\ref{eq:meex}) are newly added to guarantee that only one vehicle is allocated at a time at an edge.

\section{Constraints for three agent models}
\label{sec:appconst}

We define a set of cars that pass edge $(i,j)$ as: \[\mc{K}_{(i,j)}=\{k\in\mc{K}|i\in r_k^-,j\in r_k^-\}.\] 
Also, we define the following hard constraints that apply to to all three models.
First, car $k$ can be allocated at $i$ only when it is allocated at the upstream intersection: $\prev(i,k)$
\begin{align}
&h_{fwd}^{\ik}:~a_\ik  =\emptyset \lor a_{\tp{\prev(i,k)}{k})}\not=\emptyset,~\forall \ik\in\mc{IK} \label{eq:hfwd}
\end{align}
Second, car $k$ is allocated at $i$:
\begin{align}
&h_{alloc}^{\ik}(a):~a_{\ik}\not =\emptyset,~\forall \ik\in\mc{IK} \label{eq:halloc}
\end{align}

\subsection{Constraints for atomic agents}

For the case of atomic agents, the set of constraints $\mc{C}$ is formulated as follows:
\begin{align}
c_{delay}^{\ik}:&~\mc{N}_{delay}^{\ik}=\{\tp{\prev(i,k)}{k},\ik\}, \nonumber \\
&f_{delay}^{\ik}(a)=w_\ik,~\forall \ik \in \mc{IK}, \label{eq:cdelay} \\
c_{wb}^{\ik}:&~\mc{N}_{wb}^{\ik}=\{\tp{\prev(i,k)}{k},\ik\}, \nonumber \\
&f_{wb}^{\ik}(a)=\chi(\neg h_{wb}^{\ik})*P_{1},~\forall \ik \in \mc{IK}, \label{eq:cwb} \\
c_{lcap}^{\tp{t}{e}}:&~\mc{N}_{lcap}^{\tp{t}{e}}=\{\tp{i}{k},\tp{j}{k}|k\in\mc{N},i=\prev(j,k)\}, \nonumber \\
&f_{lcap}^{\tp{t}{e}}(a)=\chi(\neg h_{lcap}^{\tp{t}{e}})*P_{1},~\forall t\in\mc{T},\forall e\in\mc{E},  \label{eq:clcap} \\
c_{dep}^k:&~\mc{N}_{dep}^k=\{\tp{i_1^k}{k}\},\nonumber \\
&f_{dep}^k(a)=\chi(\neg h_{dep}^k)*P_{2},~\forall k\in\mc{K}, \label{eq:cdep} \\
c_{arr}^k:&~\mc{N}_{arr}^k=\{\tp{\prev(i^k_{\Omega},k)}{k}\},\nonumber \\
&f_{arr}^k(a)=\chi(\neg h_{arr}^k)*P_{1},~\forall k\in\mc{K}, \label{eq:carr} \\
c_{col}^i:&~\mc{N}_{col}^i=\{\ik \in \mc{IK}\},\nonumber \\
&f_{col}^i(a)=\chi(\neg h_{col}^i)*P_{col},~\forall i\in\mc{I}, \label{eq:ccol} \\
c_{fwd}^{\ik}:&~\mc{N}_{fwd}^{\ik}=\{\tp{\prev(i,k)}{k},\tp{i}{k}\},\nonumber \\
&f_{fwd}^{\ik}(a)=\chi(\neg h_{fwd}^{\ik})*P_{fwd} \label{eq:cfwd}\\
c_{alloc}^{\ik}:&~\mc{N}_{alloc}^{\ik}=\{\ik\},\nonumber \\
&f_{alloc}^{\ik}(a)=\chi(\neg h_{alloc}^{\ik})*P_{none}. \label{eq:calloc}
\end{align}

The correspondence between the hard constraints $h$ (\ref{eq:hwb})-(\ref{eq:hcol}) and the soft constraints $f_c$ (\ref{eq:cwb})-(\ref{eq:ccol}) is straightforward. Also, (\ref{eq:hfwd}) and (\ref{eq:halloc}) correspond to (\ref{eq:cfwd}) and (\ref{eq:calloc}). The neighborhood $\mc{N}_c$ of each constraint includes agents that are involved in the soft constraint. (\ref{eq:cdelay}) corresponds to the objective $D(a)$. 

\subsection{Constraints for car agents}

\begin{align}
c_{delay}^{k}:&~\mc{N}_{delay}^{k}=\{k\}, \nonumber \\
&f_{delay}^{k}(a)=\underset{\substack{i \in r_k^-}}{\sum}w_\ik,~\forall k \in \mc{K}, \label{eq:ccdelay} \\
c_{wb}^{k}:&~\mc{N}_{wb}^{k}=\{k\}, \nonumber \\
&f_{wb}^{k}(a)=(1-\underset{\substack{i \in r_k^-}}{\prod}\chi( h_{wb}^{\ik}))P_{1},~\forall k \in \mc{K}, \label{eq:ccwb} \\
c_{lcap}^{\tp{t}{e}}:&~\mc{N}_{lcap}^{\tp{t}{e}}=\mc{K}_{(i,j)}, \nonumber \\
&f_{lcap}^{\tp{t}{e}}(a),~\forall t\in\mc{T},\forall e=(i,j)\in\mc{E},  \label{eq:cclcap} \\
c_{dep}^k:&~\mc{N}_{dep}^k=\{k\},\nonumber \\
&f_{dep}^k(a),~\forall k\in\mc{K}, \label{eq:ccdep} \\
c_{arr}^k:&~\mc{N}_{arr}^k=\{k\},\nonumber \\
&f_{arr}^k(a),~\forall k\in\mc{K}, \label{eq:ccarr} \\
c_{col}^i:&~\mc{N}_{col}^i=\{k \in \mc{K}| i\in r_{k}^-\},\nonumber \\
&f_{col}^i(a),~\forall i\in\mc{I}, \label{eq:cccol} \\
c_{fwd}^{k}:&~\mc{N}_{fwd}^{k}=\{k\},\nonumber \\
&f_{fwd}^{k}(a)=(1-\underset{\substack{i \in r_k^-}}{\prod}\chi(h_{fwd}^{\ik}))P_{fwd},~\forall k\in\mc{K} \label{eq:ccfwd}\\
c_{alloc}^{k}:&~\mc{N}_{alloc}^{k}=\{k\},\nonumber \\
&f_{alloc}^{k}(a)=(1-\underset{\substack{i \in r_k^-}}{\prod}\chi(h_{alloc}^{\ik}))P_{none},~\forall k\in\mc{K}. \label{eq:ccalloc}
\end{align}

\subsection{Constraints for intersection agents}

\begin{align}
c_{delay}^{(i,j)}:&~\mc{N}_{delay}^{(i,j)}=\{i,j\}, \nonumber \\
&f_{delay}^{(i,j)}(a)=\underset{\substack{k \in \mc{K}_{(i,j)}}}{\sum}w_\ik,~\forall (i,j) \in \mc{E}, \label{eq:cidelay} \\
c_{wb}^{(i,j)}:&~\mc{N}_{wb}^{(i,j)}=\{i,j\}, \nonumber \\
&f_{wb}^{(i,j)}(a)=(1-\underset{\substack{k \in \mc{K}_{(i,j)}}}{\prod}\chi( h_{wb}^{\ik}))P_{1},~\forall (i,j) \in \mc{E}, \label{eq:ciwb} \\
c_{lcap}^{\tp{t}{e}}:&~\mc{N}_{lcap}^{\tp{t}{e}}=\{i,j\}, \nonumber \\
&f_{lcap}^{\tp{t}{e}}(a),~\forall t\in\mc{T},\forall e=(i,j)\in\mc{E},  \label{eq:cilcap} \\
c_{dep}^k:&~\mc{N}_{dep}^k=\{i_1^k\},\nonumber \\
&f_{dep}^k(a),~\forall k\in\mc{K}, \label{eq:cidep} \\
c_{arr}^k:&~\mc{N}_{arr}^k=\{\prev(i^k_{\Omega},k)\},\nonumber \\
&f_{arr}^k(a),~\forall k\in\mc{K}, \label{eq:ciarr} \\
c_{col}^i:&~\mc{N}_{col}^i=\{i\},\nonumber \\
&f_{col}^i(a),~\forall i\in\mc{I}, \label{eq:cicol} 
\end{align}
\begin{align}
c_{fwd}^{i}:&~\mc{N}_{fwd}^{k}=\{i,j\},\nonumber \\
&f_{fwd}^{e}(a)=(1-\underset{\substack{k \in \mc{K}_{(i,j)}}}{\prod}\chi(h_{fwd}^{\ik}))P_{fwd},
\nonumber \\ & \hspace{110pt} \forall e=(i,j)\in\mc{E} \label{eq:ccfwd}\\
c_{alloc}^{i}:&~\mc{N}_{alloc}^{i}=r_{k}^-,\nonumber \\
&f_{alloc}^{i}(a)=(1-\underset{\substack{k|i\in r_{k}^-}}{\prod}\chi(h_{alloc}^{\ik}))P_{none},\forall i\in\mc{I}. \label{eq:ccalloc}
\end{align}
\color{black}

\section{Practical Implementation}
\label{sec:appimple}

For example, the infrastructure and the system can work as follows. When a driver sets the destination into a car navigation system, the car communicates with the intersections on its route and asks for the availability of the slots at that moment. Then the car checks which slots can satisfy the constraints of CMMI, and reserves feasible slots at relevant intersections. This computation can be done in a distributed manner by each car. The intersections work as a database that store the usage of the slots, and mediate the communication between cars by sharing the information of available slots. 
The advantage of this distributed design is threefold. First, 
it is computationally scalable in dynamic setting. When a new vehicle appears in the environment or some vehicles change their routes, only the agents influenced by those changes has to update their own trips, while a centralized approach has to compute the entire allocation. Second, it is preferable from a privacy perspective, because we do not want cars to have much information about the routes of other cars. Third, there is the real-world robustness of having optimisation nodes scattered on edge computing devices across a physical network,
rather than a centralised optimisation node that can be a single point of failure. Furthermore, our approach can be applied to the mixed traffic of manual and autonomous vehicles, because it also works for connected (but not necessarily autonomous) vehicles. In that case, the agent can be on e.g. the smartphone and can communicate with the intersections, compute the algorithm, and show riders the timing that they can pass the intersections. Additional infrastructure can also help such as a traffic light that shows the timing to the drivers of manual vehicles.

\section{Algorithms}
\label{sec:appalgo}

\modified{Algorithm \ref{alg:downreset} shows the details in case of atomic agents. The other cases are in Appendix \ref{sec:appalgo}, but the idea is the same. The loop goes through the route in the right sequence (line \ref{dreset:loop}). Then the reset is done at all intersections after $i$ (line \ref{dreset:s}-\ref{dreset:e}).}


\begin{algorithm}[t]            
	\caption{Downstream reset (in case of atomic agents)}
	\label{alg:downreset}
	\begin{algorithmic}[1]
	\Procedure{$a'=$DownReset}{$j,a,a'$}
		\State{$\ik=j,down=False$}
    	\For{$i' \in r_k$} \label{dreset:loop}
    	\If{$down$} \label{dreset:s}
    	\State{$a'_{\tp{i'}{k}}=\emptyset$}
    	\EndIf
    	\If{$i=i'$}
    	\State{$down=True$} \label{dreset:e}
    	\EndIf
		\EndFor
	    \State{Return $a'$}
	\EndProcedure
	\end{algorithmic}
\end{algorithm}

\begin{algorithm}[t]  
	\caption{Downstream reset (in case of car agents)}
	\label{alg:downresetc}
	\begin{algorithmic}[1]
	\Procedure{$a'=$DownReset}{$j,a,a'$}
		\State{$k=j,down=False$}
    	\For{$i \in r_k$}
    	\If{$down$}
    	\State{$a'_\ik=\emptyset$}
    	\EndIf
    	\If{$a_\ik\not=a'_\ik$}
    	\State{$down=True$}
    	\EndIf
		\EndFor
	    \State{Return $a'$}
	\EndProcedure
	\end{algorithmic}
\end{algorithm}

\begin{algorithm}[t]            
	\caption{Downstream reset (in case of intersection agents)}
	\label{alg:downreseti}
	\begin{algorithmic}[1]
	\Procedure{$a'=$DownReset}{$j,a,a'$}
		\State{$i=j,down=False$}
    	\For{$k \in \{k|i\in r_k^-\}$}
    	\If{$a_\ik=a'_\ik$}
    	\State{continue}
    	\EndIf
    	\For{$i_l^k \in r_k^-$}
    	\If{$down$}
    	\State{$a'_{\tp{i_l^k}{k}}=\emptyset$}
    	\EndIf
    	\If{$i_l^k=i$}
    	\State{$down=True$}
    	\EndIf
    	\EndFor
		\EndFor
	    \State{Return $a'$}
	\EndProcedure
	\end{algorithmic}
\end{algorithm}

To address the exponential size of the search space, we replace  the action space in line \ref{brudr:space} of Algorithm \ref{alg:brudr} with the action space restricted by Algorithm \ref{alg:space}. In more detail, for each intersection on the route (line \ref{space:itri}), it checks the feasibility of each time slot (line \ref{space:itrt}), and allocates the earliest one that satisfies all the \ijcai{soft} constraints (line \ref{space:chkstart}-\ref{space:chkend}). Algorithm \ref{alg:space} shows the case of the car agents. The case of the intersection agents is in Appendix \ref{sec:appalgo}. 

\begin{algorithm}[t]	
    \caption{Fastest feasible allocation (in case of car agents)}
	\label{alg:space}
	\begin{algorithmic}[1]
	\Procedure{$a'_j=\mc{A}_j$}{$a$}
	    \State{$a'_j=a_j$}
    	\For{$i \in r_j^-$}
    	\label{space:itri}
    	\For{$t \in \mc{T}$}
    	\label{space:itrt}
    	\State{$a'_{\tp{i}{j}}=t$}
    	\label{space:chkstart}
    	\State{$feasible=True$}
    	\For{$c_l \in \mc{C}_j\setminus\{c_{delay}\}$}
    	\If{$f_l(a')>0$}
    	\State{$feasible=False$}
    	\State{$break$}
    	\EndIf
		\EndFor
		\If{$feasible$}
		\State{break}
    	\label{space:chkend}
		\EndIf
		\EndFor
		\EndFor
	    \State{Return $a'$}
	\EndProcedure
	\end{algorithmic}
\end{algorithm}

Algorithm \ref{alg:spacei} shows the $\mc{A}_j$ function for intersection agents.

\begin{algorithm}[t]	
    \caption{Fastest feasible allocation (in case of intersection agents)}
	\label{alg:spacei}
	\begin{algorithmic}[1]
	\Procedure{$a'_j=\mc{A}_j$}{$a$}
	    \State{$a'_j=a_j$}
    	\For{$k\in\{k\in\mc{K}|j\in r_k^-\}$}
    	\label{space:itri}
    	\For{$t \in \mc{T}$}
    	\label{space:itrt}
    	\State{$t_{min}=T_k^o$}
    	\If{$j\not=i_1^k$}
    	\State{$i=\prev(j,k)$}
    	\If{$a_{\tp{i}{k}}=\emptyset$}
    	\State{$a'_{\tp{j}{k}}=\emptyset$}
    	\Else
    	\State{$t_{min}=a_{\tp{i}{k}}+L_{(i,j)}$}
    	\EndIf
    	\EndIf
    	\If{$t<t_{min}$}
    	\State{continue}
    	\EndIf
    	
    	\State{$a'_{\tp{j}{k}}=t$}
    	\label{space:chkstart}
    	\State{$feasible=True$}
    	\For{$c_l \in \mc{C}_j\setminus\{c_{delay}\}$}
    	\If{$f_l(a')>0$}
    	\State{$feasible=False$}
    	\State{$break$}
    	\EndIf
		\EndFor
		\If{$feasible$}
		\State{break}
    	\label{space:chkend}
		\EndIf
		\EndFor
		\EndFor
	    \State{Return $a'$}
	\EndProcedure
	\end{algorithmic}
\end{algorithm}

\section{Communication complexity}
\label{sec:appcomm}

The communication complexity depends on what we define an agent to be.
For example, atomic agents only have to communicate with an upcoming intersection independently, while a car agent has to communicate with all intersections on its route. Similarly, intersection agents have to communicate with all cars that can pass the intersection at the same time. However, with the system design above, we do not need communication among cars, but a car only has to communicate with intersections on its route. Also, according to Lemma \ref{thm:feas}, each agent has to communicate at most once in case of atomic and car agents. The size of the message between a car and an intersection to check the available slots is at most $T*$(max degree of the road network), and then is polynomial in the problem size. When $bestU=True$ in \textproc{BRUDR}, the same message exchange as in MGM (Maximum Gain Message \cite{pearce2005local}) is needed, but it just requires one more message to share the expected improvement in utility.

\section{Negative result on unbounded PoA}
\label{sec:apppoa}

Even though BRUDR finds a pure Nash equilibrium, it is not always \modified{the maximizer of the social welfare $J(a)$, }
even in case of the car agents, where the social optimum coincides with the optimal solution of $D(a)$. 
In this section, we discuss the possibility to bound the PoA of CMMI. 
First, we define the PoA as follows.

\begin{equation}
\left.
\begin{array}{l}
PoA=\underset{\substack{{a_{ne}\in\mc{A}_{ne}}}}{\max} \frac{|J(a_{ne})|+\epsilon}{|J(a_{so})|+\epsilon}
\end{array}
\right.
\label{eq:poa}
\end{equation}

\noindent where $\mc{A}_{ne}$ is the set of all feasible Nash equilibria and $a_{so}$ is a social optimum. The lower value of PoA is desirable to guarantee the lower bound of the efficiency, and $PoA=1$ is the best.
We add a constant $\epsilon>0$ because $J(a_{so})$ can be zero. Although there are techniques to bound the PoA based on the submodularity of the objective function \cite{nemhauser1978analysis,vetta2002nash}, unfortunately they cannot be applied to CMMI. This is because our objective function $J(a)$ does not satisfy the ``non-decreasing'' property, which is a  necessary condition of using these techniques. (Allocating time slots can violate \ijcai{soft} constraints, which decreases $J(a)$). Even worse, we have the following negative result.

\begin{theorem}
The PoA of CMMI is unbounded.
\label{thm:unbound}
\end{theorem}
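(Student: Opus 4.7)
The plan is to prove the theorem by exhibiting a parameterized family of CMMI instances $\{F_n\}_{n\geq 1}$ for which the ratio $|J(a_{ne})|/|J(a_{so})|$ grows without bound as $n$ increases. The construction extends the example in Figure~\ref{fig:counterexp} in the natural way: keep the same road topology and first four cars, but add $n$ additional cars behind car~4 on the same route, with departure times spaced exactly one time unit apart, so that cars $3,4,5,\ldots,n+2$ travel together as an arriving platoon toward the downstream intersection that is in conflict with car~1's trajectory. Because the family simply scales the number of trailing cars, the topology, edge lengths, and conflict structure remain fixed.

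First, I would exhibit the social optimum $a_{so}$ for $F_n$: schedule car~1 to wait one time unit at its first intersection so that it clears the shared downstream intersection just before the platoon arrives. This delays only car~1, giving $D(a_{so}) = O(1)$ independent of $n$, and hence $|J(a_{so})|$ is bounded by a constant (for car agents, by Lemma~\ref{thm:objective}; for the other agent models, by the weighted-sum approximation noted after Lemma~\ref{thm:objective}). Second, I would exhibit a feasible Nash equilibrium $a_{ne}$ in which car~1 is allocated its earliest slot, forcing the entire arriving platoon to brake one at a time at the downstream intersection; each of the $n$ trailing cars then incurs at least one unit of waiting, giving $D(a_{ne}) = \Omega(n)$ and correspondingly $|J(a_{ne})| = \Omega(n)$.

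The third and main step is to verify that the constructed joint action $a_{ne}$ is indeed a pure Nash equilibrium, i.e. that no agent has a profitable unilateral deviation. For car agents, I would argue case by case: car~1 cannot improve by delaying itself, because its own utility only worsens; each trailing car~$k$ cannot improve by choosing an earlier slot at the conflict intersection, since that slot is occupied by the car immediately in front of it (collision penalty $P_{\text{col}} \gg T$), and cannot improve by choosing a later slot, since any later slot only increases its own waiting time. A symmetric argument applies to atomic agents (each atomic decision has the same forced-brake structure) and to intersection agents (where reshuffling at a single intersection also yields no utility gain because the incoming order on each edge is fixed by the edge's travel time and the upstream allocations). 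Finally, combining the two bounds yields
\begin{equation*}
\text{PoA} \;\geq\; \frac{|J(a_{ne})| + \epsilon}{|J(a_{so})| + \epsilon} \;=\; \Omega(n),
\end{equation*}
which is unbounded as $n\to\infty$.

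The hard part will be the Nash-equilibrium verification in step three, since one must rule out all unilateral deviations rather than only the ``shift by one slot'' ones; in particular, one has to confirm that a trailing car cannot skip several slots forward into some gap, which requires carefully choosing the departure times so that no such gap exists (the platoon is tight) and ensuring the capacity constraint $h_{\text{lcap}}$ together with the collision constraint $h_{\text{col}}$ block every alternative slot a deviator might consider. Once the instance is tightened so that the platoon is contiguous in every time window under consideration, the Nash property follows by the penalty hierarchy~(\ref{eq:penalty}).
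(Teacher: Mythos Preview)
Your approach is essentially the same as the paper's: both extend the instance of Figure~\ref{fig:counterexp} by appending additional cars behind car~4, exhibit a constant-delay social optimum (car~1 yields) versus a linear-delay Nash equilibrium (car~1 does not yield, triggering a cascade), and let the platoon size go to infinity to get $PoA=(N-1+\epsilon)/(1+\epsilon)\to\infty$. One imprecision worth fixing: in the paper's geometry car~1 does not conflict with the platoon directly; car~1 conflicts with car~2 at intersection~5, and it is the \emph{delayed} car~2 that then meets cars~$3,4,\ldots$ at intersection~2, with the waiting-time bound $T_{UB}=1$ forcing each successive platoon car (rather than car~2) to brake---this $T_{UB}$-driven cascade is the mechanism you need to invoke when you tighten the Nash verification in your step three.
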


\begin{proof}
The following example shows a case where PoA can be unbounded.

\begin{example}
\label{thm:example}
The road network and routes of cars are shown in Figure \ref{fig:counterexp}. There are 4 cars $\mc{K}=\{1,2,3,4\}$ and 7 intersections $\mc{I}=\{1,\ldots,7\}$. The cars have different departure time as $T^o_1=T^o_2=0, T^o_3=1,T^o_4=2$. All links have the same length $L_e=5$\modified{, and $T_{UB}=1$.}

Let's assume car agents for now.
Since car 1 and car 2 depart at $t=0$, one of them has to brake at intersection $i=5$ at $t=5$. If car 1 brakes, all other cars can go through without braking, then the solution is a social optimum $a_{so}$, which is also a Nash equilibrium. In this case, the total delay is $D(a_{so})=1$ and the social welfare is $J(a_{so})=-D(a_{so})=-1$. However, if car 2 brakes at $i=5$, car 2 and 3 arrive at $i=2$ at the same time $t=11$. In this case, car 3 has to brake because $T_{UB}=1$. Also, car 4 has to brake to avoid car 3. The solution is also a Nash equilibrium, because only car 1 can improve the efficiency\modified{, but has no incentive to do so.} The total delay of the solution $a_{ne}$ is $D(a_{ne})=3$ and the social welfare is $J(a_{ne})=-3$. Thus $PoA=|J(a_{ne})+\epsilon|/|J(a_{so})+\epsilon|=(3+\epsilon)/(1+\epsilon)$. Note that we can make instances that have more cars departing from $i=4$ after car 4. In general, $PoA=(N-1+\epsilon)/(1+\epsilon)$. Then, $\lim_{N\to \infty}PoA=\infty$. Note that the Nash equilibria above are also Nash equilibria in case of other agent types. Then similarly, PoA is also unbounded in those cases.
\end{example}
\end{proof}

\modified{As we can see in Example \ref{thm:example}, ensuring smooth passage of a platoon of cars (car 3 and 4) is a key to traffic efficiency. However, Theorem \ref{thm:NPh} and \ref{thm:unbound} indicate that it is difficult to create a platoon going through multiple intersections without braking. In the following section, we propose a technique that can address this issue.}

\section{Proofs}
\label{sec:appproof}

\subsection{Proof of Theorem \ref{thm:NPh}}

\begin{proof}
We show the detail of the reduction of job shop scheduling (JSP) to CMMI. 
It is known that the following JSP is NP-hard
\cite{gonzalez1982unit}. 
There are $n\geq 1$ independent jobs $\mc{J}=\{J_1,\ldots,J_n\}$ to be processed by $m$ machines $\mc{M}=\{M_1,\ldots,M_m\}$. Each job $J_i$ consists of $l_i$ tasks, and $j$th task of $J_i$ is denoted by $\tau_{\tp{i}{j}}$. The mapping between tasks and machines is given by $q_{i,j}$, which means that task $\tau_{\tp{i}{j}}$ is processed by $M_{q_{i,j}}$. The $j+1$th task cannot be executed before the $j$th task terminates. The processing time of task $\tau_{\tp{i}{j}}$ is $t_{\tp{i}{j}}=1$. A schedule $S$ consists of a sequence of assignments of jobs to machines in such a way that:
\begin{itemize}
    \item no job meets with more than one machine at a time
    \item no machine meets with more than one job at a time, and
    \item job $J_i$ meets machine $M_j$ for exactly $t_{\tp{i}{j}}$ time units.
\end{itemize}
No task is interrupted during the process. Let $f_i(S)$ represent the finishing time for job $J_i$ in schedule $S$. The goal of the problem is to find schedule $S$ that minimises the mean finishing time (MFT), $\sum_{J_i\in\mc{J}}f_i(S)/n$.

Now we show that any of this JSP can be reduced to a special case of CMMI where:
\begin{itemize}
    \item The set of cars $\mc{K}$ is the set of jobs $\mc{J}$.
    \item The set of intersections $\mc{I}$ is the set of machines $\mc{M}$. 
    \item The route $r_k^-$ of car $k\in\mc{K}$ is the tasks of the corresponding job.
    \item All cars have identical departure time $T_k^o=0$.
    \item Once $j$th task terminates, $j+1$th task can be assigned immediately. This is represented by $L_e=0, \forall e\in\mc{E}$.
    \item The road network contains edges that cover all the consecutive two tasks.
    \item $T$ and $T_{UB}$ and $K_{UB}$ are large enough so that $h_{wb}, h_{lcap}$ and $h_{arr}$ are not violated.
    \item The objective function of CMMI $D(a)$ is MFT multiplied by constant $K$.
\end{itemize}
\end{proof}

\subsection{Proof of Lemma \ref{thm:feas}}

\begin{proof}
\begin{figure}[t]
\centering
\includegraphics[width=0.7\columnwidth]{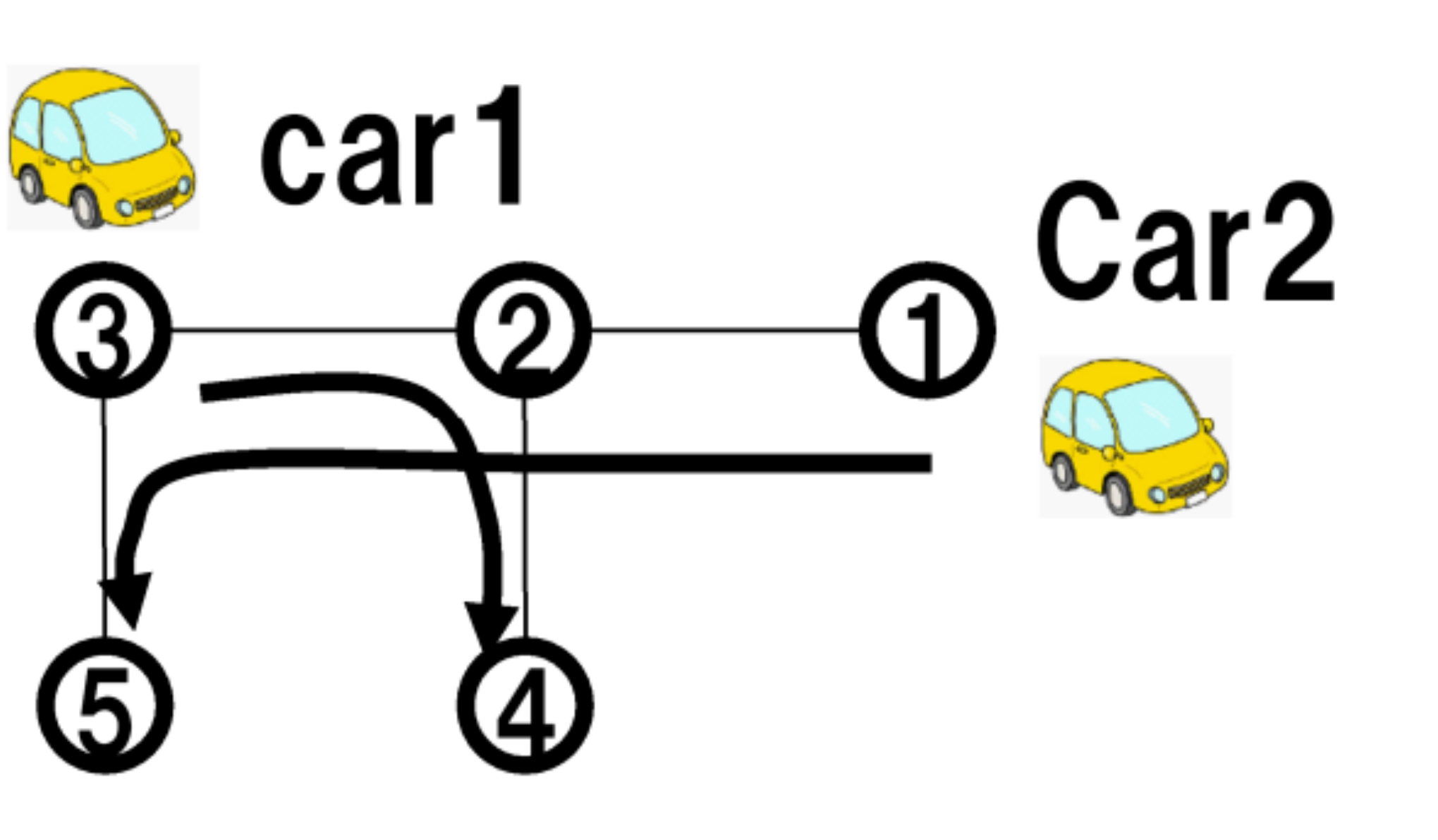}
\caption{Example where solution can be infeasible in case of intersection agents: $T^o_1=T^o_2=0$ and $L_e=5$.}
\label{fig:infeasible}
\end{figure}

First, we assume car agents, or atomic agents.
Because of the order of the penalty constants as in (\ref{eq:penalty}), BRUDR tries to satisfy the constraints one by one with the priority. First, the largest $P_{fwd}$ forces updates from upstream agents, and the first term of $c_{fwd}$ is never violated by the downstream reset. Next, agents can always find the earliest allocation that satisfy $c_{col}$ and $c_{lexc}$, because agents can delay their trips as much as they need to avoid collisions without the 3 bounding constraints, $c_{wb}, c_{lcap}$ and $c_{arr}$. Then, the other hard constraints, the second term of $c_{fwd}$ and $c_{dep}$ also can be satisfied with the same reason without violating other constraints. Since agents can find the earliest allocation without violating the hard constraints of other agents, and later updating agents do not change the earliest allocations of already allocated agents, BRUDR converges to a pure Nash equilibrium with each agent updating the action at most once.

This does not hold in case of intersection agents, and Figure \ref{fig:infeasible} shows an example. The two cars depart at the same time, and conflict at intersection 2. Let's say BRUDR allocates from the smallest ID. Then, agent of intersection 1 allocates the departure time of car 2 first, then intersection 2 computes the allocation. Because of the largest $P_{fwd}$, car 1 cannot be allocated yet and then only car 2 is allocated, as $a_{2,1}=\emptyset,a_{2,2}=5$ for example. Then intersection 3 allocates both cars. After the first updates of all intersections, intersection 2 tries to update again because car 1 is not allocated there yet. Then it tries to update as $a_{2,1}=5,a_{2,2}=6$, for example. However, this fails because car 2 also updates at the intersection and has to reset at the downstream intersection 3, which results in the same $P_{none}$ but increases a unit time of delay. Then the update is not the best response and is not executed, remaining $a_{2,1}=\emptyset$ which is infeasible.
\end{proof}

\section{Experimental settings}
\label{sec:appexp}

\begin{itemize}
    \item We call numpy.random.seed at the beginning of the main code with different seed for each run of simulation. 
    \item We use CPLEX 12.7.1, Python 3.8, Red Hat Enterprise Linux Server release 6.10 and Intel Xeon CPU E5-2670 (2.60 GHz) with 8 cores, 132 GB memory to run the experiments.
    \item The source code is in the supplementary material. \camera{We do not use any DCOP simulators. All the codes are implemented from scratch.}
    \item We use pNEUMA dataset of zone 4, on the date of 24/10/2018 from 8:30 to 9:00 (Figure \ref{fig:epfl}).
    \item The real-world data of vehicle trajectory is provided in Latitude-Longitude coordinate system. The code converts the data to East-North-Up coordinate. Then the code scans all data points in a vehicle trajectory and decides that an intersection is on the route if a data point is inside a radius ($r=5$ [m]) from the center of the intersection. This conversion procedure is also included in the source code.
\end{itemize}

\begin{figure}[t]
\centering
\includegraphics[width=1.0\columnwidth]{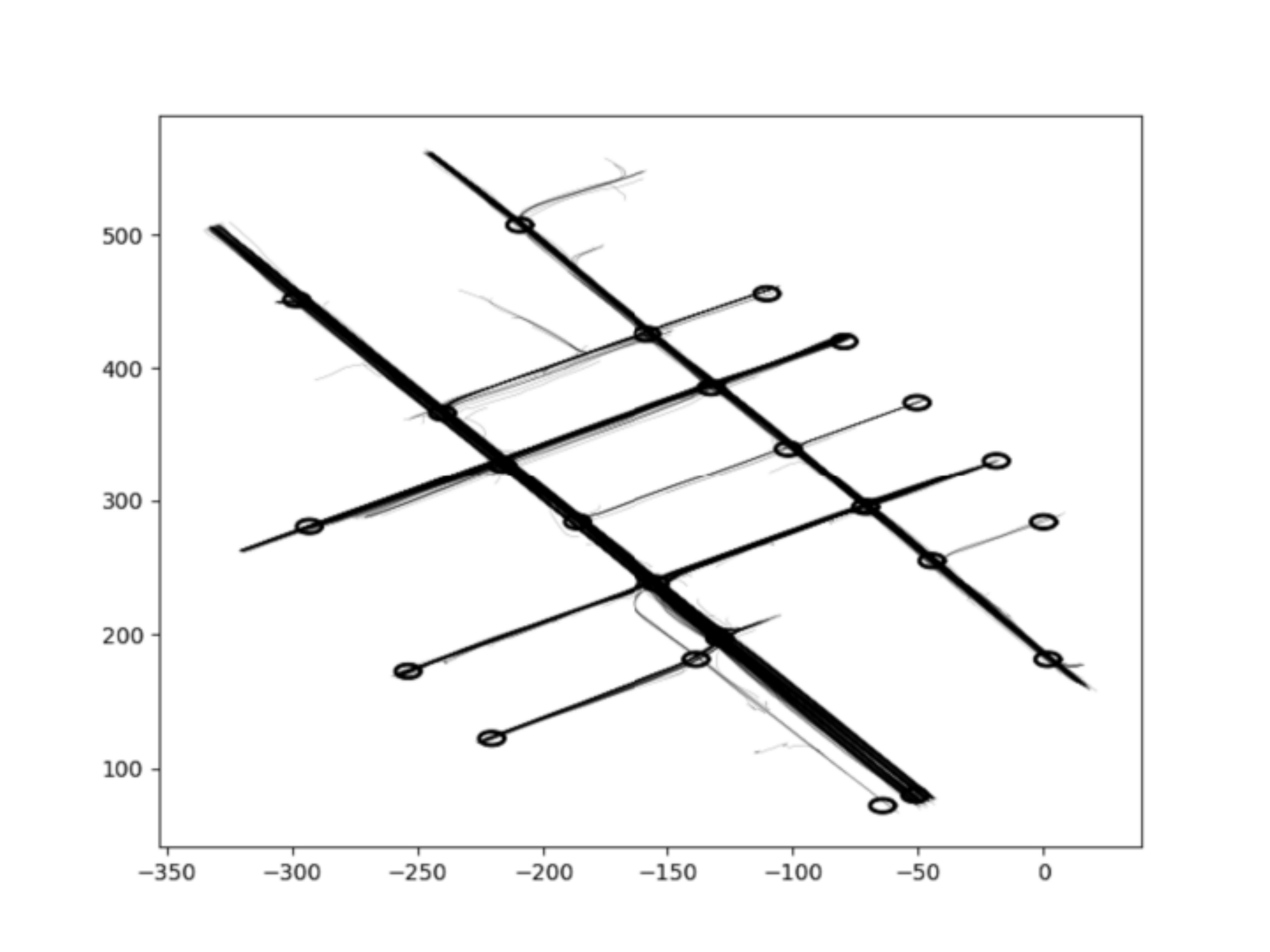}
\caption{Vehicle trajectories in Athens. The circles are intersections.}
\label{fig:epfl}
\end{figure}

\section{Experiment on externality among vehicles}
\label{sec:appextern}

As we saw in Figure \ref{fig:smalleff}, the difference between algorithms disappears when the problem size becomes large. However, cars can have negative externalities that cause cascades of braking as in Figure \ref{fig:counterexp}, even in large problems. To see this aspect, we compute the negative externality of agent $i$ as follows.

\begin{equation}
\left.
\begin{array}{l}
ex_i=D(a_{ne})-D(a_{ne}^{-i}),
\end{array}
\right.
\end{equation}

where \modified{$a_{ne}^{-i}$} is a Nash equilibrium when agent $i$ is absent. Note that $ex_i$ is positive and a larger value is worse. Figure \ref{fig:extern} shows the worst externality of agents $\max_i ex_i$ with different number of cars $K$, when using $default$. For each $K$, we conducted 200 experiments. Also note that the computational complexity of $ex_i$ is large and this experiment is difficult to scale. It shows that the negative externality of agents becomes larger when the size of the problems becomes larger. Meanwhile, we also see the improvement by $pba$ defined as $D(a_{ne}^{default})-D(a_{ne}^{pba})$, where $a_{ne}^{default}$ and $a_{ne}^{pba}$ are Nash equilibria computed by the default BRUDR and BRUDR starting from $pba$, respectively. Figure \ref{fig:improve} shows the ratio of the improvement over the worst negative externality. It shows that the improvement by $pba$ becomes larger when $K$ increases, and reaches about 8\% of the worst externality.

\begin{figure}[t]
\centering
\includegraphics[width=0.85\columnwidth]{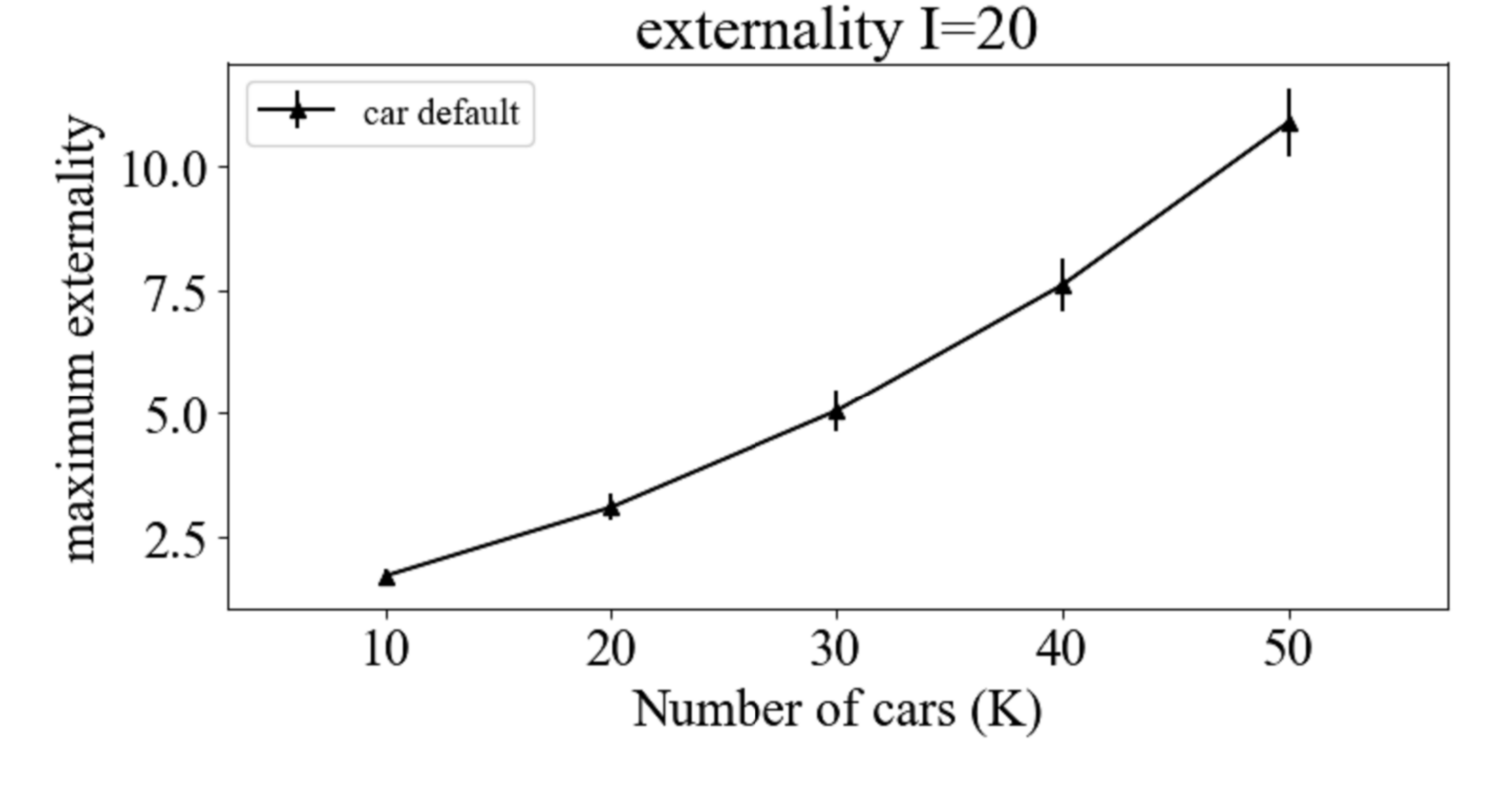}
\caption{The worst externality of agents $\max_i ex_i$ with different number of cars $K$ when cars are controlled by $default$.}
\label{fig:extern}
\end{figure}

\begin{figure}[t]
\centering
\includegraphics[width=0.85\columnwidth]{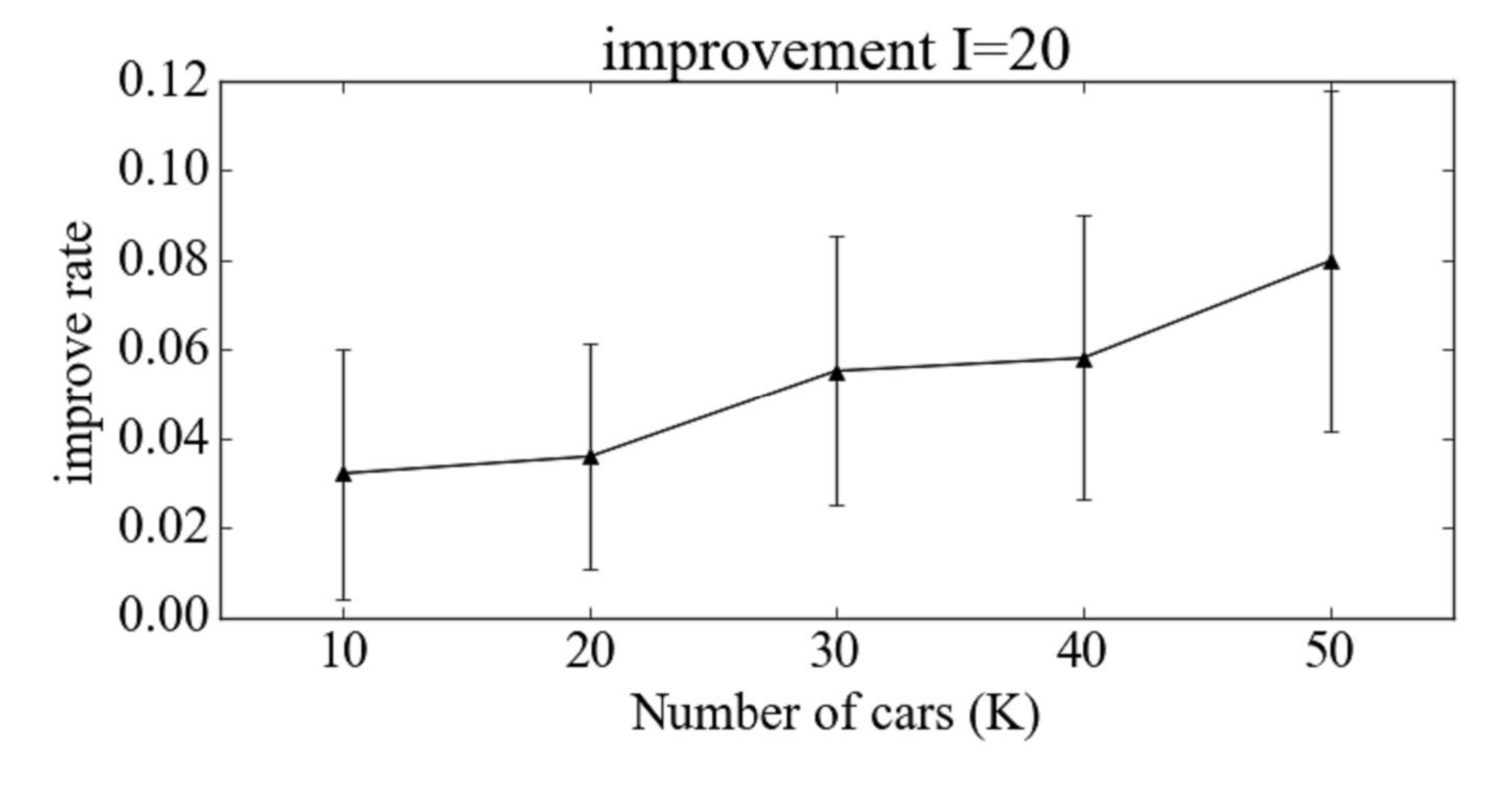}
\caption{The ratio of improvement in externality by $pba$ with different number of cars $K$.}
\label{fig:improve}
\end{figure}

\color{\newcolor}
\section{Reproducibility Checklist}
\label{sec:apprepro}

\subsection{Algorithm}

If the paper introduces a new algorithm, it must include a conceptual outline and/or pseudocode of the algorithm for the paper to be classified as CONVINCING or CREDIBLE. (CONVINCING)

\subsection{Theoretical contribution}

If the paper makes a theoretical contribution: 
\begin{enumerate}
    \item All assumptions and restrictions are stated clearly and formally (yes)
    \item All novel claims are stated formally (e.g., in theorem statements) (yes)
    \item Appropriate citations to theoretical tools used are given (yes)
    \item Proof sketches or intuitions are given for complex and/or novel results (yes)
    \item Proofs of all novel claims are included (yes)
\end{enumerate}

For a paper to be classified as CREDIBLE or better, we expect that at least 1. and 2. can be answered affirmatively, for CONVINCING, all 5 should be answered with YES. (CONVINCING)

\subsection{Data sets}

If the paper relies on one or more data sets: 

\begin{enumerate}
    \item All novel datasets introduced in this paper are included in a data appendix (NA)
    \item All novel datasets introduced in this paper will be made publicly available upon publication of the paper (NA)
    \item All datasets drawn from the existing literature (potentially including authors’ own previously published work) are accompanied by appropriate citations (yes)
    \item All datasets drawn from the existing literature (potentially including authors’ own previously published work) are publicly available (yes)
    \item All datasets that are not publicly available (especially proprietary datasets) are described in detail (NA)
\end{enumerate}

Papers can be qualified as CREDIBLE if at least 3., 4,. and 5,. can be answered affirmatively, CONVINCING if all points can be answered with YES. (CONVINCING)

\subsection{Experiments}

If the paper includes computational experiments:

\begin{enumerate}
    \item All code required for conducting experiments is included in a code appendix (yes)
    \item All code required for conducting experiments will be made publicly available upon publication of the paper (yes)
    \item Some code required for conducting experiments cannot be made available because of reasons reported in the paper or the appendix (NA)
    \item This paper states the number and range of values tried per (hyper-)parameter during development of the paper, along with the criterion used for selecting the final parameter setting (yes)
    \item This paper lists all final (hyper-)parameters used for each model/algorithm in the experiments reported in the paper (yes)
    \item In the case of run-time critical experiments, the paper clearly describes the computing infrastructure in which they have been obtained (yes, in Appendix \ref{sec:appexp})
\end{enumerate}

For CREDIBLE reproducibility, we expect that sufficient details about the experimental setup are provided, so that the experiments can be repeated provided algorithm and data availability (3., 5., 6.), for CONVINCING reproducibility, we also expect that not only the final results but also the experimental environment in which these results have been obtained is accessible (1., 2., 4.). (CONVINCING)
\color{black}


\fi


\end{document}